\title{Lightweight Robust Framework for\\Workload Scheduling in Clouds}
\author{Muhammed Abdulazeez, Dariusz R. Kowalski, Prudence W.H. Wong\\Department of Computer Science, University of Liverpool, UK, \\Email: [m.abdulazeez,d.kowalski,pwong]@liverpool.ac.uk \\Pawel	Garncarek\\Institute of Computer Science, Wroclaw University, Poland,
	\\Email:pgarn@cs.uni.wroc.pl}
\newtheorem{theorem}{Theorem}
\newtheorem{lemma}{Lemma}
\newcommand{\ep}{\varepsilon}
\newcommand{\remove}[1]{}
\DeclareMathOperator*{\argmax}{arg\,max}
\newcommand{\pg}[1]{{#1}}
\newcommand{\ba}[1]{{#1}}
\newcommand{\comment}[1]{}
\newcommand{\SecureMaxWork}{RobustMaxWork}
\begin{document}

\maketitle

\begin{abstract}
Reliability, security and stability of cloud services without sacrificing too much resources have become a desired feature in the area of workload management in clouds. 
The paper proposes and evaluates a lightweight framework for scheduling a workload 
which part could be unreliable. This unreliability could be caused by various types
of failures or attacks.
Our framework for robust workload scheduling efficiently combines classic 
fault-tolerant and security tools, such as packet/job scanning, with 
workload scheduling, and it does not use any heavy resource-consuming 
tools, e.g., cryptography or non-linear optimization. More specifically, the framework uses a novel objective function to allocate
jobs to servers and constantly decides which job to scan based on
a formula associated with the objective function.
We show how to set up the objective function and the corresponding scanning procedure 
to make the system provably stable, provided it satisfies a specific stability condition. 
As a result, we show that our framework assures cloud stability
even if naive scanning-all and scanning-none strategies are not stable.
We extend the framework to decentralized scheduling and evaluate it under several
popular routing procedures.
\end{abstract}

\section{Introduction}
\label{sec:intro}


Cloud computing~\cite{MellG2011} enables ubiquitous, convenient, on-demand network access to a shared pool of configurable computing resources.
It is becoming more and more popular for businesses to access computing facilities without investing in IT infrastructure. ~\cite{ec2,icloud,computeEngine,azure}.
Cloud users send in resource requests in an online manner and the cloud provider allocates the required resources for the required amount of time. What is important, the resource allocation is transparent to the users.
The provider allocates resources based on the system dynamicity and current system load.
We refer readers to surveys on cloud computing for cloud technology~\cite{FosterZRL2008, ArmbrustFG+2009, ZhangCB2010}.

While there is a growth in the use of cloud services, many potential users are still reluctant to deploy their business in the cloud.
Major concerns are its reliability, security and stability~\cite{SubashiniK2011}.
There are different reliability and security issues depending on different delivery models of cloud services, including Software as Service (SaS), Platform as Service (PaS) and Infrastructure as Service (IaS).
In this work we focus on the IaS model. 
This technology makes the users and provider reside at different locations and virtually access the resources over the Internet, therefore any security concerns threatening the Internet also threaten the cloud.
In particular, we consider the scenarios when part of the workload is unreliable,
e.g., fault-prone or generated by malicious sources, and propose a lightweight framework that combines load management and detection of unreliable traffic.
We investigate how to strike a balance between efficient workload scheduling and packet/job scanning so that we can maintain stability (as a guarantee of bounded buffers at machines) without sacrificing too much resources to filter out the unreliable part of the workload. IaS provides users with  computing infrastructure in the form of Virtual Machines (VM). Following~\cite{MaguluriSY12}, we assume that the users request resources such as memory, CPU and storage, for a certain amount of time in the form of VMs; this corresponds to a job to be done.
Upon receiving the requests (typically in a form of packets), the system has to allocate the required resources by scheduling the VMs on the server.
We extend the model in~\cite{MaguluriSY12} by considering scenarios where part of the 
workload is genuine and the other unreliable. Genuine traffic comes from real users; completing these requests counts towards system's work done. Unreliable traffic is subject to failures or comes from attackers, who aim to disrupt the system by issuing requests that occupy resources; completing these does not count as proper work done. We adopt a classic reliability and security tool of packet scanning to detect these malicious packets~\cite{ModiPB+13}. While scanning is able to distinguish genuine from unreliable requests,it consumes and wastes resources that would normally be used for serving genuine workload. 

On the other hand, as we do not know whether the packets are faulty/fake until we scan them, 
we may also waste time and resources in scanning genuine packets. Therefore, the scheduling algorithm needs to strike a balance between the resources wasted by scanning and by performing unreliable requests without scanning them.

We consider centralized and distributed scheduling algorithms. 
In the {\em centralized setting}, there are central queues, and upon arrival  
jobs are added to the central queues corresponding to the requested type of VMs;
recall that there is a limited number of types of VMs (as each of VMs is
in fact a small operating system~\cite{MaguluriSY12}) and each job is allocated
to a VM of the requested type. When the resources become available, the centralized scheduling algorithm determines which set of jobs is to be served and to which servers the VMs are mapped to.
In the {\em distributed setting}, each server has its own queues; upon arrival, a job request is forwarded  to some server and stored in the server's local queue corresponding to the requested
type of VMs. The requests from these local queues are served by a distributed scheduling algorithm locally on the server.

The system is \emph{stable} if the queues do not tend to increase without bound.
We aim to characterize the maximum arrival rates of genuine and unreliable requests under which there is an algorithm to maintain the stability of the system and to develop such algorithm if it exists.
In addition, to guarantee quality of service, 
we also measure job latency, 
which is defined as the amount of time a job resides in the system since its arrival.
We present the precise model in Section~\ref{sec:model} and the proposed algorithms and analysis in Sections~\ref{sec:algorithm} to~\ref{sec:decentralized}. 
We present the experiments and conclude in Sections~\ref{sec:experiments}
and~\ref{sec:miscellaneous}.

\subsection{Related Work}
\label{sec:relatedwork}

Apart from maintaining stability,
there are many other design issues related to 
workload management in cloud computing.
Cloud utilization has been considered in~\cite{XieJYZ2013,RezvaniAJ2015,YazirMF+2010}. 
Optimizing other costs of running the services has been considered~\cite{StolyarZ2015,WangMZ2011,XieJYZ2013}. 

The algorithms we propose here
are inspired by the MaxWeight algorithm 
analyzed in~\cite{tassiulas1992stability} in the context of scheduling genuine workload only, and could be seen as its efficient generalizations to unreliable environments.
The MaxWeight algorithm has been since investigated extensively~\cite{MarkakisMT2013,SunZW+2013,MaguluriTY2014}. 
Detecting and distinguishing unreliable or malicious from genuine requests and 
a number of  approaches have been proposed~\cite{ModiPB+13,BakshiD10}.
In this paper, we assume that such a tool to scan a packet and detect potentially unreliable or malicious packages is available.
The authors in~\cite{MaguluriS14} studied jobs with unknown duration
and analyzed several decentralized approaches and showed that some are throughput-optimal while others are not. 
Another study~\cite{hung2014task} aimed to optimize
recovery time after failures; it is different from our aim to prevent the impact of failures by 
tailored combination of scheduling and scanning tools.

\subsection{Our Contributions}
\label{sec:contribution}

We propose a lightweight robust framework to manage workload in clouds
under unreliable workload scenarios.
Extending the model in~\cite{MaguluriSY12,tassiulas1992stability},
we propose to detect unreliable part of the traffic by scanning only some 
specifically selected jobs without sacrificing too much resources. 
\begin{itemize}
\item
We propose a theoretical model to capture the essence of this conditional scanning
and show that under a certain system capacity region and stochastic arrival pattern of genuine and unreliable jobs there exists an algorithm, called {\SecureMaxWork}, that manages the workload while maintaining queue stability (i.e., the queue is bounded and does not grow to infinity size).
\item
We show how to efficiently compute the optimal scanning strategy (vector).
\item
We prove that there is no stable algorithm for the workload outside the capacity region
for which {\SecureMaxWork} is stable.
\item
We propose several distributed versions of {\SecureMaxWork} and discuss various extensions of theoretical results.
\item 
We evaluate the algorithms and the proposed model using extensive simulations,
with respect to the maximum and average latency over time.
\end{itemize}

\comment{
\begin{itemize}
\item We represent unreliable traffic as a variable fraction of ``genuine'' traffic and consider how varying the rate of unreliable traffic affects the system performance.
\item We consider a simple scanning mechanism and characterize the system capacity region 
such that it is possible to realize traffic loads under this region and the queues of the systems can be stabilized. The capacity region in unreliable scenario is defined in much complex way than that  
in the idealistic reliable model~\cite{MaguluriSY12}. 
\item We propose an algorithm, called {\SecureMaxWork}, that guarantees stability within this capacity region; which
 
	is a balance of MaxWeight-type of opitimization 
	(on the total pending workload instead of the number of pending jobs) with
	carefully selected scanning strategy.
	We also adapt {\SecureMaxWork} to the distributed setting. 
	The stability guarantees of the algorithms
	are analyzed mathematically.
\item We then evaluate the algorithm and the proposed model using extensive simulations.
  We evaluate both the maximum and average queue sizes and latency over time.

\end{itemize}
}

\section{Model}
\label{sec:model}

We consider a cloud system modeled by a network of physical machines that have limited available resources (for instance, CPU, memory, storage, \dots) and is supposed to be able to process an ongoing stream of jobs.

\textbf{Servers.} We consider a set of $n$ networked servers (physical machines). Each server has its own resources that it can distribute among jobs, that is, for each
resource it has a fixed capacity.

\textbf{Jobs.} 
A job is specified by its type and length. \pg{Since there are limited number of virtual machine types, we only consider limited number of job types -- there are $J$ types of jobs.} Each type is a set of demands for resources; more specifically, for each available server resource a job type has a number specifying how much of this resource is required in order to process any job of this type.

There are $I$ different lengths of jobs possible: $L_1,\ldots,L_I$

We consider online random arrival model, where new jobs arrive independently of each other and are identically distributed across all time slots,
and the variance of arrival length is finite. Let $\lambda_{i,j}$ denote expected \pg{sum of lengths} of genuine (i.e., user-generated) type-$j$ jobs of length $L_i$ that arrive per time slot, for any positive integers $j\le J$ and $i\le I$.

\textbf{Processing jobs and feasible configurations.}
Each server can process a set of jobs simultaneously, as long as the cumulative amount of each resource used by these jobs does not exceed the server capacity for this resource. 
Processing jobs is done in synchronous time steps, also called rounds.
The whole system capacity is a linear sum of capacities of all the servers.
Given job types and server capacities, one can compute the set $\mathcal{S}$ of all feasible configurations, where feasible configuration denotes a vector 
$N=(N_1,\ldots,N_J)$ such that the system can process 
simultaneously $N_j$ type-$j$ jobs, for every $j$.

\textbf{Malicious jobs and security tools.} 
Let $\kappa_{i,j}$ denote the expected \pg{sum of lengths} of malicious jobs of type-$j$ of 
length $L_i$ that arrive per time slot. 
Similarly as genuine jobs, malicious jobs arrive independently of each other and are identically distributed across all time slots, and the variance is finite.
We assume that we have a scanning tool that, given a job, can detect whether it is a genuine user request (we will call it a  {\em good job}) or a malicious request (a {\em malicious job}). Each scanning takes $1$ time slot per job and requires same resources as the original job \pg{(scanning is done on the same virtual machine)}.

\textbf{Central scheduler.} We consider a central scheduler with a queue of all injected, but not yet finished, jobs. The scheduler decides which servers process which jobs for the next time slot. After this time slot, all unfinished jobs return to the scheduler with saved progress and can be processed further at a later time and by a different server. This property of a system is called preemptiveness.
The centralized algorithm, called {\SecureMaxWork}, will be introduced in Section~\ref{sec:algorithm}.

\textbf{Distributed scheduler.} In decentralized approach all servers maintain separate queues for jobs of type $j$, therefore when a job arrives decision has to be made as to which server to route the job to. Each server runs locally a protocol {\SecureMaxWork} with respect to its local queues. The distributed implementations of algorithm {\SecureMaxWork} will be presented in Section~\ref{sec:decentralized}.

\textbf{Notation.}
 
\begin{itemize}
\item
$n$ denotes the number of servers in the cloud; 
\item
$I$ denotes the number of different job lengths;
\item
$J$ denotes the number of different job types;
\item 
$A(t)=(A_1(t),\ldots,A_J(t))$ denotes the vector of sets of type-$j$ jobs, for $j\le J$, which arrive to the system in the beginning of time slot $t$;
\item
$Q(t)$ denotes the vector of queue lengths (i.e., sum of lengths of jobs in the queue) for each type of jobs in the beginning of time slot $t$;
\item 
$Q_j(t)$ denotes the total length of users' and malicious type-$j$ jobs, for $j\le J$, in the beginning of time slot $t$;
\end{itemize}

In addition there are the following notations regarding {\SecureMaxWork} algorithm:
\begin{itemize}
\item
$\alpha_{i,j}$ is a probability of scanning type-$j$ job of length $L_i$; 
the algorithm may implement a specific scanning strategy, i.e., 

use a specific vector $\alpha$.
\item
$X_j(t)$ is the total length of queued type-$j$ jobs that will not be scanned,
taken in the beginning of time slot $t$ 
(i.e., the algorithm scanned them already or decided not to scan them at all); 
\item
$Y_j(t)$ is the total length of queued type-$j$ jobs that will be scanned, 
taken in the beginning of time slot $t$
(i.e., the algorithm has already decided to scan them, but has not scanned them yet); 
\item
$Z_j(t) = Z_j(Q(t))$ is the expected time required to process type-$j$ jobs stored in queue in the beginning of time slot $t$; \pg{the formula with an explanation for it will be given in section \ref{sec:algorithm};}

\item $a_j$ is the expected time required to process type-$j$ jobs that arrive in one time slot; \pg{the formula with an explanation for it will be given in section \ref{sec:alg-feasbility};}

\end{itemize}
Whenever time slot $t$ is clearly fixed or understood from the context, we  may omit an argument $t$ from the formulas.

\textbf{Scanning strategies.} We will compare the following scanning strategies:
\begin{itemize}
\item 
Scan-None--- always executes a job without scanning, i.e., $\alpha_{i,j}=0$ for all $i,j$;
\item Scan-All --- scans all jobs except those with processing time shorter or equal to the scanning time (recall that scanning takes $1$ time slot), i.e., $\alpha_{i,j}=0$ for $L_i\leq 1$ and $\alpha_{i,j}=1$ otherwise;
\item 
Scan-Opt--- will be defined in section \ref{subsec:opt_scan_freq}.
\end{itemize}

\textbf{Stability.} We say that, given arrival rates $\lambda$ and $\kappa$, the algorithm is stable if the expected queue size at any fixed time is bounded, i.e. $\limsup \limits_{t\rightarrow \infty} E[\sum_j Q_j(t)] < \infty$.

\section{Main Algorithm --- Centralized Version}
\label{sec:algorithm}

\begin{algorithm}
\ba{Recall that $\lambda$ is the expected length of genuine jobs,
$\kappa$ is the expected length of malicious jobs and $\alpha$ is probability of scanning jobs.}
\begin{algorithmic}
\caption{{\SecureMaxWork}($\lambda, \kappa, \alpha$)}

\State $X \gets \vec{0}$ {\tt // \ba{jobs that will not be scanned}}
\State $Y \gets \vec{0}$ {\tt // \ba{jobs be that will be scanned}}
\State $Q \gets \vec{0}$ {\tt // \ba{all jobs}}
\Loop
	\State new time slot begins
	\ForAll {new type-$j$ job $\tau_{i,j}$ of length $L_i$}
		\State $r \gets$ random value from $[0;1]$
		\If {$r <\alpha_{i,j}$}  \hfill {\tt // $\tau_{i,j}$ to be scanned}
			\State $Y_j \gets Y_j + L_i$ 
			\State $Q_j \gets Q_j + L_i$
		\Else   \hfill {\tt // $\tau_{i,j}$ not to be scanned}
			\State $X_j \gets X_j + L_i$ 
			\State $Q_j \gets Q_j + L_i$
		\EndIf
	\EndFor
	\ForAll {$j$} 
		\State $Z_j \gets X_j + Y_j(\lambda_j/(\lambda_j+\kappa_j) + E(1/L_j))$
	\EndFor
	\State $N' \gets \argmax_{N \in \mathcal{S}} \sum_j N_j \cdot Z_j$
	\ForAll{$j$}
		\For{$k \leq N'_j$}
			\State Process\_job($j$)
		\EndFor
	\EndFor
\EndLoop

\label{alg:SecureMaxWork}

\end{algorithmic}
\end{algorithm}

\begin{algorithm}
\begin{algorithmic}
\caption{Process\_job($j$)}

\If {there are still jobs in $X_j$ and $Y_j$ that are not yet scheduled to be processed in this time slot}
	\State $r \gets$ random value from $[0;1]$
	\If {$r < X_j/(X_j+Y_j)$}
		\State Process\_job\_X($j$)
	\Else
		\State Process\_job\_Y($j$)
	\EndIf
\ElsIf {there are no jobs in $X_j$ that are not yet processed in this time slot, but there are still such jobs in $Y_j$}				
	\State Process\_job\_Y($j$)
\ElsIf {there are no jobs in $Y_j$ that are not yet processed in this time slot, but there are still such jobs in $X_j$}				
	\State Process\_job\_X($j$)
\Else
	\State Stay Idle
\EndIf

\label{alg:Procedure_job}

\end{algorithmic}
\end{algorithm}

\begin{algorithm}
\begin{algorithmic}
\caption{Process\_job\_X($j$) {\tt \ba{// not to be scanned}}}

\State Process a unit of any unscheduled job contributing to $X_j$
\State $X_j \gets X_j -1$
\State $Q_j \gets Q_j -1$

\label{alg:Procedure_X}

\end{algorithmic}
\end{algorithm}

\begin{algorithm}
\begin{algorithmic}
\caption{Process\_job\_Y($j$){\tt \ba{// to be scanned}}}

\State Scan any unscheduled job contributing to $Y_j$
\State $L_i \gets$ length of the scheduled job
\State $Y_j \gets Y_j - L_i$
\If {detected as malicious}					
	\State $Q_j \gets Q_j - L_i$
\Else
	\State $X_j \gets X_j + L_i$
\EndIf

\label{alg:Procedure_Y}

\end{algorithmic}
\end{algorithm}

Algorithm {\SecureMaxWork} (see Algorithm \ref{alg:SecureMaxWork}, with Algorithms~\ref{alg:Procedure_job}, \ref{alg:Procedure_X} and \ref{alg:Procedure_Y} as sub-procedures) is parametrized by: 
scanning vector $\alpha=(\alpha_{i,j})_{i\le I, j\le J}\in [0,1]^{I\times J}$, vector of rates of genuine user's requests $\lambda=(\lambda_{i,j})_{i\le I, j\le J}$, and vector of rates of malicious requests $\kappa=(\kappa_{i,j})_{i\le I, j\le J}$ .
Upon arrival of type-$j$ job of length $L_i$, algorithm {\SecureMaxWork} decides to scan it with probability $\alpha_{i,j}$ (c.f., the first {\bf for all} loop in 
Algorithm~\ref{alg:SecureMaxWork}).

\pg{The key idea of {\SecureMaxWork} is to measure the expected time required to process all jobs of each type $j$ and prioritize the jobs of type which accumulated the most. The expected time (also called expected work) required to process all jobs of type $j$ accumulated in queue at time $t$ is denoted by $Z_j(t)$.}

\pg{It takes $X_j$ time to process jobs that will not be scanned (jobs contributing to $X_j$). Jobs contributing to $Y_j$ will need to be scanned (by definition of $Y_j$), which requires $Y_j \cdot E(1/l_j)$ expected time. In expectance $\lambda_j/(\lambda_j+\kappa_j)$ fraction of scanned jobs are genuine, so after scanning, they still must be processed, taking in total $Y_j \cdot \lambda_j/(\lambda_j+\kappa_j)$ time. $\kappa_j/(\lambda_j+\kappa_j)$ fraction of scanned jobs are fake and after scanning they take no more processing time. Therefore: \\
$Z_j(t) = X_j(t) + Y_j(t) \cdot (\lambda_j/(\lambda_j+\kappa_j) + E(1/\ell_j))$,\\
where $\ell_j$ is a (random) length of arriving type-$j$ jobs.}

The algorithm then computes values $Z_j$ (c.f., the second {\bf for all} loop in 
Algorithm~\ref{alg:SecureMaxWork}) and
finds configuration $N$ from the set of feasible server configurations $\mathcal{S}$ 
that maximizes the sum $\sum_{j=0}^J Z_j(t) N_j$, i.e., the objective of the algorithm in each
time slot $t$ is:
$$\max_{N \in S} \sum_{j=0}^J Z_j(t) N_j \ .$$
This configuration is denoted by $N'$. \pg{The quick intuition behind this function is that the more jobs of a given type accumulate, the more weight should be put to scheduling the jobs of that type in order to prevent further accumulation. $Z_j$ here is the weight given to jobs of type $j$.}

Finally, in the last {\bf for all} loop, the algorithm processes $N'_j$ jobs of type $j$,
for each $j \in \{1,\dots,J\}$;
that is, from each processed job it executes a unit of it and the total size of
$Q_j$ decreases by $N'_j$ at the end of time slot $t$. 
It is done by calling procedure Process\_job($j$), c.f., 
Algorithm~\ref{alg:Procedure_job}.
If $N'_j$ is larger than the number of different type-$j$ jobs in the queues, {\SecureMaxWork} 
processes as many type-$j$ jobs as possible instead, each time processing
a unit of each such job.
(c.f., the second part of procedure Process\_job($j$)). If $N'_j$ is smaller than the number of different type-$j$ jobs in the queues, {\SecureMaxWork} has to decide which type-$j$ jobs to process (c.f., the first part of procedure Process\_job($j$)). 
It repeats $N'_j$ times: 

\begin{itemize}
\item with probability $X_j/(X_j+Y_j)$ it processes a job that will not be scanned 
(i.e., a job that contributes to $X_j$),
\item with probability $Y_j/(X_j+Y_j)$ it scans a job pending for scanning
(i.e., a job that contributes to $Y_j$).
\end{itemize}
If there are not enough jobs contributing to $X_j$, it processes all jobs contributing to $X_j$ and as many jobs contributing to $Y_j$ as possible, so that altogether it processes $N_j$ type-$j$ jobs. Vice versa, if there are not enough jobs contributing to $Y_j$, it processes all jobs contributing to $Y_j$ and as many jobs contributing to $X_j$ as possible.
Processing and/or scanning a specific type-$j$ job is done by calling sub-procedures
Process\_job\_X($j$) and/or Process\_job\_Y($j$), respectively (c.f., 
Algorithms~\ref{alg:Procedure_X} and \ref{alg:Procedure_Y}, resp.)
directly from the execution of procedure Process\_job($j$).

In short words, we could describe a single time slot of an execution of algorithm 
{\SecureMaxWork} as follows. 
We always have $X_j + Y_j = Q_j$, for any type-$j$, as each job is either
waiting for scanning or not (i.e., has been already scanned or is not selected for
scanning at all). 
Whenever a type-$j$ job of length $L_i$ arrives, with probability $\alpha_{i,j}$ its length is added to $Y_j$, otherwise its length is added to $X_j$. When the algorithm executes one unit of a job contributing to $X_j$, $X_j$ is reduced by $1$. When the algorithm executes one unit of job contributing to $Y_j$, it means it scans it --- if it was a genuine user job, its length is removed from $Y_j$ and added to $X_j$ \pg{(so the algorithm spent one round on scanning, but the sum $X_j + Y_j$ remains the same)}; if it was a malicious job, its length is removed from $Y_j$.

\section{Analysis}
\label{sec:analysis}

In this section we prove that algorithm {\SecureMaxWork}
is stable if there is $\ep>0$ and vector $a$ such that 
$a \in (1-\ep) \cdot co(S)$.

\begin{theorem}
\label{stability}
The {\SecureMaxWork} algorithm is stable for all arrival patterns $\lambda, \kappa$, for which there exists a stable algorithm.
\end{theorem}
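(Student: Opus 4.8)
The plan is a Foster--Lyapunov drift argument carried out in the expected-work coordinates $Z_j$. I would take the quadratic Lyapunov function $V(t)=\sum_{j} Z_j(t)^2$. Since $Z_j=X_j+c_jY_j$ with $c_j=\lambda_j/(\lambda_j+\kappa_j)+E(1/\ell_j)$ bounded away from $0$ and from above, while $Q_j=X_j+Y_j$, the quantities $Z_j$ and $Q_j$ are within constant multiplicative factors of each other. Hence proving $\limsup_t E[\sum_j Z_j(t)]<\infty$ is equivalent to the theorem's stability conclusion $\limsup_t E[\sum_j Q_j(t)]<\infty$, and it suffices to exhibit strictly negative drift of $V$ outside a bounded set.

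First I would establish the per-type drift identity $E[Z_j(t+1)-Z_j(t)\mid Q(t)] = a_j - s_j(t)$, where $a_j$ is the expected work arriving per slot and $s_j(t)$ is the expected number of server-slots effectively devoted to type $j$. The design of $Z_j$ as the expected number of slots needed to clear the type-$j$ queue is exactly what forces each allocated slot to decrease $Z_j$ by one in expectation: a processing step removes one unit from $X_j$, while a scanning step consumes one slot and, in expectation over the genuine/malicious outcome and the job length, reduces the remaining expected work by exactly one. When the queue is large we have $s_j(t)=N'_j(t)$; near emptiness $s_j(t)\le N'_j(t)$, but this discrepancy occurs only on a bounded region and contributes an additive constant. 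I would then bound the second-order term $E[(Z_j(t+1)-Z_j(t))^2\mid Q(t)]\le C$ using the finite-variance assumption on arrival lengths and the boundedness of per-slot service. Combining these gives
\[
E[V(t+1)-V(t)\mid Q(t)] \le 2\sum_j Z_j(t)\,(a_j-N'_j(t)) + C' .
\]

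Here the MaxWeight choice enters. Since $N'=\argmax_{N\in\mathcal S}\sum_j Z_j N_j$ and maximizing a linear function over $\mathcal S$ equals maximizing over $co(\mathcal S)$, we have $\sum_j Z_j N'_j=\max_{N\in co(\mathcal S)}\sum_j Z_jN_j$. The stability condition $a\in(1-\ep)co(\mathcal S)$ writes $a=(1-\ep)b$ with $b\in co(\mathcal S)$, so $\sum_j Z_j a_j\le(1-\ep)\sum_j Z_jN'_j$. Substituting yields
\[
E[V(t+1)-V(t)\mid Q(t)] \le -2\ep\sum_j Z_j(t)N'_j(t)+C' \le -2\ep\delta\sum_j Z_j(t)+C',
\]
for some $\delta>0$, because $\max_{N\in\mathcal S}\sum_j Z_jN_j\ge \max_j Z_j \ge \frac1J\sum_j Z_j$ whenever every type can be served by some configuration. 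This drift is strictly negative once $\sum_j Z_j$ exceeds a threshold, so the Foster--Lyapunov criterion gives positive recurrence and $\limsup_t E[\sum_j Z_j(t)]<\infty$, establishing stability on the whole region $a\in(1-\ep)co(\mathcal S)$.

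To upgrade this to the claimed optimality---stability whenever any stable algorithm exists---I would add the matching converse: if $a$ lies outside the closure of $co(\mathcal S)$ (equivalently $a\notin(1-\ep)co(\mathcal S)$ for every $\ep>0$ under the optimal scanning strategy), then every algorithm serves at most a point of $co(\mathcal S)$ worth of work per slot in expectation, so $\sum_j Z_j(t)$ has uniformly positive drift and $E[\sum_j Z_j(t)]\to\infty$; hence no stable algorithm exists. Together with the observation that the optimal scanning vector minimizes the induced load $a(\alpha)$, so that $a(\alpha^\ast)\in(1-\ep)co(\mathcal S)$ precisely when the workload is schedulable at all, the two directions give the theorem. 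The step I expect to be the main obstacle is the per-type drift identity: making rigorous that one server-slot reduces $Z_j$ by exactly one requires controlling the randomness of scanning (genuine versus malicious outcomes, random job lengths, and the randomized split between $X$- and $Y$-jobs) and, in particular, arguing that the length distribution of the queued $Y$-jobs is consistent with the $E(1/\ell_j)$ factor in the definition of $Z_j$, so that the linear estimate of remaining work is unbiased up to the bounded error absorbed into $C'$.
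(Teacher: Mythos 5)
Your proposal follows essentially the same route as the paper: the quadratic Lyapunov function $V=\sum_j Z_j^2$ in the expected-work coordinates, a drift decomposition into an arrival term $a\cdot Z$, a MaxWeight service term $N'\cdot Z$, and a bounded second-order term $K$ from the finite-variance assumption, closed off by Foster's criterion, with the converse (no algorithm is stable when $a\notin co(\mathcal{S})$ for every scanning vector) supplying the optimality clause exactly as the paper does via its non-stability theorem and the reduction to extreme-point scanning vectors. The one step you flag as delicate --- that a service slot decreases $Z_j$ by one in expectation, which requires the queued $Y$-jobs' length distribution to be consistent with the $E(1/\ell_j)$ factor --- is precisely the content of the paper's Lemma~\ref{lemma:processing}, so your plan is aligned with the paper's argument in both structure and its weakest point.
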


In the remainder of this section we will prove Theorem~\ref{stability}. 
We will need the following result (extension of Foster's criteria for irreducible Markov chains).

\begin{theorem}[\cite{ASM1987}]
\label{th:Foster}
Consider a Markov chain $Q(t)$ with state space $\mathcal{Q}$. Consider a random walk on it, starting from a state $x$. Let $\tau_x$ denote the time when the random walk first reaches some recurrent state (or infinity if it never reaches any). If there exists a lower bounded real function $V: \mathcal{Q} \rightarrow \mathbb{R}$, an $\epsilon > 0$ and a finite subset $\mathcal{Q}_0$ of $\mathcal{Q}$ such that
\begin{equation}
E[V(Q(t+1)) - V(Q(t)) | Q(t) = q] < -\epsilon, \quad \text{if } q \notin \mathcal{Q}_0,
\end{equation}
\begin{equation}
E[V(Q(t+1)) | Q(t) = q] < \infty, \quad \text{if } q \in \mathcal{Q}_0, 
\end{equation}
then we have
\begin{equation}
P(\tau_q < \infty) = 1, \quad 
\forall q \in T
\end{equation}
and all states $q \in \cup_{j=1}^\infty R_j$ are positive recurrent.
\end{theorem}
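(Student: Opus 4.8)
The plan is to prove this Foster--Lyapunov criterion from scratch, treating it as a statement about a general (not necessarily irreducible) discrete-time Markov chain. Write $\mathcal{Q}_0$ for the finite exceptional set and let $\tau = \inf\{t \ge 0 : Q(t) \in \mathcal{Q}_0\}$ be its hitting time. First I would reduce to the case $V \ge 0$ by replacing $V$ with $V - \inf V$, which is legitimate since $V$ is lower bounded and the negative-drift hypothesis is invariant under adding a constant to $V$. Then I would introduce the penalised, stopped process
\begin{equation}
M(t) = V(Q(t \wedge \tau)) + \epsilon\,(t \wedge \tau),
\end{equation}
and verify that $(M(t))_{t\ge 0}$ is a nonnegative supermartingale with respect to the natural filtration: on the event $\{\tau > t\}$ we have $Q(t) \notin \mathcal{Q}_0$, so the one-step conditional increment of $V$ is strictly less than $-\epsilon$ by the drift hypothesis, which cancels the deterministic increment $+\epsilon$ of the penalty term to give a nonpositive increment; on $\{\tau \le t\}$ the process is frozen and the increment is zero.

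Next I would extract a hitting-time bound. Since $M(t) \ge \epsilon\,(t \wedge \tau) \ge 0$, the supermartingale property yields $\epsilon\, E_q[t \wedge \tau] \le E_q[M(t)] \le E_q[M(0)] = V(q)$ for every starting state $q$ (writing $E_q$ for expectation conditioned on $Q(0)=q$). Letting $t \to \infty$ and applying monotone convergence to the nondecreasing quantity $t \wedge \tau$ gives
\begin{equation}
E_q[\tau] \le V(q)/\epsilon < \infty,
\end{equation}
so the chain reaches the finite set $\mathcal{Q}_0$ in finite expected time, in particular almost surely, from every state. Combining this with the boundedness hypothesis, the expected \emph{return} time to $\mathcal{Q}_0$ starting from a state $q \in \mathcal{Q}_0$ is finite: after one step the chain sits at a state $q'$ with $E_q[V(Q(1))] < \infty$, and from $q'$ the expected time to re-enter $\mathcal{Q}_0$ is at most $V(q')/\epsilon$, so the expected return time is at most $1 + E_q[V(Q(1))]/\epsilon < \infty$.

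Finally I would turn the finite return time to the finite set into the two stated conclusions. Because the expected return time to $\mathcal{Q}_0$ is finite from each of its finitely many states, the chain visits $\mathcal{Q}_0$ infinitely often almost surely; by the pigeonhole principle some single state of the finite set is visited infinitely often and is therefore recurrent, which gives $P(\tau_q < \infty) = 1$ for every $q$. For positive recurrence I would argue that every recurrent class $R_j$ must meet $\mathcal{Q}_0$: a recurrent class is closed, so if it were disjoint from $\mathcal{Q}_0$ then, started inside it, the chain would never hit $\mathcal{Q}_0$, contradicting $\tau < \infty$ almost surely. Picking $q^\ast \in R_j \cap \mathcal{Q}_0$ and passing to the embedded chain of successive visits to $\mathcal{Q}_0$ --- a Markov chain on the finite set $\mathcal{Q}_0$ whose inter-visit times have finite mean --- the state $q^\ast$ is recurrent in a finite chain and hence has finite expected embedded return time; multiplying by the finite mean inter-visit time bounds its true expected return time, so $q^\ast$ is positive recurrent, and positive recurrence propagates to all of $R_j$ as a communicating-class property.

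The main obstacle I anticipate is precisely this last passage, from ``finite expected hitting time of a finite set'' to ``positive recurrence of individual recurrent states.'' It requires the structural decomposition into transient states and closed recurrent classes, the observation that each recurrent class must intersect $\mathcal{Q}_0$, and a careful use of the embedded chain on $\mathcal{Q}_0$ to convert embedded return times into real-time return times. The supermartingale step, by contrast, is routine once the penalised process $M(t)$ is written down, the only subtlety there being to rely on monotone convergence applied to the nondecreasing penalty $\epsilon\,(t\wedge\tau)$ rather than on a uniform-integrability argument for optional stopping, which would require controlling $V(Q(t\wedge\tau))$ directly.
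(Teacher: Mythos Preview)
The paper does not prove this theorem at all: it is stated with the citation \cite{ASM1987} in the theorem header and is invoked as a black box (an extension of Foster's criterion from Asmussen's \emph{Applied Probability and Queues}) to conclude the proof of Theorem~\ref{stability}. There is therefore nothing in the paper to compare your argument against.

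That said, your proposal is a correct and standard self-contained proof of the Foster--Lyapunov criterion. The penalised stopped process $M(t)=V(Q(t\wedge\tau))+\epsilon(t\wedge\tau)$ is exactly the right object; the supermartingale inequality together with monotone convergence cleanly gives $E_q[\tau]\le V(q)/\epsilon$, and your use of the second hypothesis to bound the expected \emph{return} time to $\mathcal{Q}_0$ from within $\mathcal{Q}_0$ by $1+E_q[V(Q(1))]/\epsilon$ is the right move. The final passage---every recurrent class must meet $\mathcal{Q}_0$ because recurrent classes are closed, then an embedded-chain argument on the finite set $\mathcal{Q}_0$ to upgrade to positive recurrence---is the standard route and is sound, though as you yourself flag it deserves a line or two more care (in particular, making explicit that the inter-visit times to $\mathcal{Q}_0$ have \emph{uniformly} bounded mean over the finitely many starting points in $\mathcal{Q}_0$, so that the finite expected embedded return time of $q^\ast$ translates to a finite real-time expected return). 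None of this is a gap; it is just the place where the bookkeeping is heaviest, and you have identified it correctly.
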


Let $V(Q(t)) = \sum_j (Z_j(Q(t)))^2$. Note that $V(Q(t)) \geq 0$ for all possible queue states $Q(t) \in \mathcal{Q}$. We show that there exist two positive numbers $b,\epsilon$ such that the inequality
\begin{equation}
E[V(Q(t+1)) - V(Q(t)) | Q(t) = q] < -\epsilon
\end{equation}
holds for all $q \in \mathcal{Q}$ for which $q_j \geq b$.

Let $A(t)$ denote the vector of arrival lengths for each type of job, with distinction between jobs that will be scanned and jobs that will not be scanned, in the beginning of time slot $t$. Let $Alg(t)$ denote the vector of queue changes due to algorithm decisions for each type of job, with distinction between jobs that will be scanned and jobs that will not be scanned, in the beginning of time slot $t$.
We will be using $Z_j$ as a shorthand of $Z_j(Q(t))$, $A$ as a shorthand of $A(t+1)$, and $Alg$ as a shorthand for $Alg(t+1)$. 

$$\begin{array}{rl}
& \hspace*{-3em} E[V(Q(t+1)) - V(Q(t)) | Q(t) = q] \\
=& E[\sum_j[Z_j(Q(t+1))^2 - Z_j^2] | Q(t)=q] \\
=& E[\sum_j[(Z_j + Z_j(A) - Z_j(Alg))^2 - Z_j^2] | Q(t)=q] \\
=& E[\sum_j[(Z_j(A) - Z_j(Alg))^2 +\\
& + 2Z_j(A - Z_j(Alg)) ] | Q(t)=q] \\
\leq & K + 2E[\sum_j[Z_j \cdot Z_j(A)] |Q(t)=q] + \\
& - 2E[\sum_j[[Z_j \cdot Z_j(Alg) ] | Q(t)=q] \ .
\end{array}$$
The last inequality comes from $E[\sum_j[(Z_j(A) - Z_j(Alg))^2] | Q(t)=q]$ being upper bounded under assumption that the variances of arrival lengths are finite;
 
we denoted this upper bound by $K$. 

\begin{lemma}
\label{lem:finite}
There exists finite set $\mathcal{F} \subseteq \mathcal{Q}$ such that for all $q \in \mathcal{Q}-\mathcal{F}$:
$$\begin{array}{rl}
K + 2E[\sum_j[Z_j \cdot Z_j(A)] |Q(t)=q] + & \\
& \hspace*{-9em} - 2E[\sum_j[[Z_j \cdot Z_j(Alg) ] | Q(t)=q] < 0 \ .
\end{array}
$$
\end{lemma}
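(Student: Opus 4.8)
The plan is to establish the negative-drift inequality required by Theorem~\ref{th:Foster} by splitting the bound derived above into an \emph{arrival term} and a \emph{service term}, handling each separately, and then invoking the capacity assumption $a\in(1-\ep)\cdot co(\mathcal{S})$. First I would observe that arrivals are i.i.d.\ across slots and independent of the current state, so the expected increase of $Z_j$ produced by one slot of arrivals is a state-independent constant equal to $a_j$, the expected work of type-$j$ jobs arriving per slot (the scanning cost being folded into the weight $\lambda_j/(\lambda_j+\kappa_j)+E(1/\ell_j)$ applied to the scanned fraction). Pulling the constants $Z_j(q)$ out of the expectation then gives $E[\sum_j Z_j\cdot Z_j(A)\mid Q(t)=q]=\sum_j Z_j(q)\,a_j$.

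For the service term the key fact is that the configuration $N'$ chosen by {\SecureMaxWork} maximizes the linear objective $\sum_j Z_j N_j$ over $\mathcal{S}$. I would first argue that whenever all queues are sufficiently long the algorithm can always supply the $N'_j$ units it schedules for each type, so each scheduled slot removes one unit of expected work in expectation, yielding $E[\sum_j Z_j\cdot Z_j(Alg)\mid Q(t)=q]\ge \sum_j Z_j(q)\,N'_j$ up to an additive constant absorbed into $K$. Since $a/(1-\ep)\in co(\mathcal{S})$, taking $\bar N=a/(1-\ep)$ and using that a linear objective attains the same maximum over $\mathcal{S}$ and over $co(\mathcal{S})$, together with $Z_j\ge 0$, gives $\sum_j Z_j N'_j\ge \sum_j Z_j\bar N_j=\frac{1}{1-\ep}\sum_j Z_j a_j$. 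Substituting both bounds collapses the drift to
\[
K+2\sum_j Z_j a_j-2\sum_j Z_j N'_j\ \le\ K-\frac{2\ep}{1-\ep}\sum_j Z_j a_j .
\]
This is strictly negative once $\sum_j Z_j a_j$ exceeds $K(1-\ep)/(2\ep)$, so I would take $\mathcal{F}$ to be the set of states $q$ with $\sum_j Z_j(q)\,a_j\le K(1-\ep)/(2\ep)$; because $Z_j$ grows with the queue lengths and the state space is discrete, this set is finite, which is exactly the assertion of the lemma.

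The step I expect to be the main obstacle is the service-term bound, namely justifying that for large queues the expected per-slot decrease of $Z_j$ really equals $N'_j$. This is where the particular definition of $Z_j$ earns its keep: one must verify that processing a unit of an unscanned ($X$) job and scanning a ($Y$) job each reduce the expected remaining work by exactly one unit on average, reconciling the one slot spent on scanning, the malicious fraction $\kappa_j/(\lambda_j+\kappa_j)$ that disappears after scanning, and the genuine jobs that migrate from $Y_j$ to $X_j$. I would also need to treat the boundary cases in the job-selection procedure where a type runs out of $X$- or $Y$-jobs; these arise only when the corresponding queue is small, so they produce a bounded error confined to finitely many states and can be merged into $\mathcal{F}$ and into the constant $K$.
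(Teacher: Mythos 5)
Your proposal is correct and takes essentially the same route as the paper: it splits the drift into the arrival term $a\cdot Z$ and the service term $N'\cdot Z$, uses the max-weight choice of $N'$ together with the capacity slack ($a$ strictly inside $co(\mathcal{S})$) to get a surplus proportional to $a\cdot Z$ that dominates the constant $K$ outside a finite set, and correctly isolates the paper's Lemma~\ref{lemma:processing} (one unit of expected work removed per scheduled slot) as the step needing verification. Your explicit threshold definition of $\mathcal{F}$ is in fact a cleaner rendering of the paper's direction-by-direction finiteness argument in Lemma~\ref{lem:aux}.
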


To prove Lemma~\ref{lem:finite}, we need the following result:

\begin{lemma}
\label{lem:aux}
For almost all queue states $q$ there exists a feasible configuration $N \in S$ such that $N \cdot Z \geq K + a \cdot Z$ (where $Z = Z(q)$ and $\cdot$ is scalar product).
\end{lemma}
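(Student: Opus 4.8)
The plan is to verify the inequality using the specific configuration $N'=\argmax_{N\in S} N\cdot Z$ that the algorithm already computes, and to collapse the whole statement to a single scalar inequality. Writing $M=\max_{N\in S} N\cdot Z = N'\cdot Z$, the claim ``there exists $N\in S$ with $N\cdot Z \geq K + a\cdot Z$'' is implied by $M \geq K + a\cdot Z$ (just take $N=N'$). So the entire lemma reduces to showing $M - a\cdot Z \geq K$ for all but finitely many queue states $q$.

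First I would exploit the stability hypothesis $a\in(1-\epsilon)\,co(S)$. By definition of the convex hull there are feasible configurations $N^{(1)},\dots,N^{(m)}\in S$ and weights $p_k\geq 0$ with $\sum_k p_k=1$ such that $a=(1-\epsilon)\sum_k p_k N^{(k)}$. Since the scalar product is linear, $a\cdot Z = (1-\epsilon)\sum_k p_k\,(N^{(k)}\cdot Z) \leq (1-\epsilon)\max_k (N^{(k)}\cdot Z) \leq (1-\epsilon)M$, because each $N^{(k)}\in S$ and $M$ is the maximum of $N\cdot Z$ over $S$. Consequently $M - a\cdot Z \geq M - (1-\epsilon)M = \epsilon M$, and it suffices to prove $\epsilon M \geq K$, i.e. $M\geq K/\epsilon$, outside a finite set of states.

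It then remains to show that $M$ grows without bound as the queue grows, which is where the form of $Z$ enters. From $Z_j = X_j + Y_j\big(\lambda_j/(\lambda_j+\kappa_j)+E(1/\ell_j)\big)$, together with $X_j+Y_j=Q_j$ and the fact that the coefficient $\beta_j := \lambda_j/(\lambda_j+\kappa_j)+E(1/\ell_j)$ is a strictly positive constant, one gets $Z_j \geq c\,Q_j$ with $c=\min_j\min\{1,\beta_j\}>0$. Hence if $\sum_j Q_j$ is large then some coordinate $Q_{j^\star}$, and therefore $Z_{j^\star}\geq c\,Q_{j^\star}$, is large; choosing any feasible configuration that serves at least one job of type $j^\star$ gives $M\geq Z_{j^\star}\geq (c/J)\sum_j Q_j$. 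Thus $M\to\infty$ as the total queue length grows, so $M\geq K/\epsilon$ holds for every $q$ with $\sum_j Q_j$ above a fixed threshold. Since job lengths are drawn from the finite set $\{L_1,\dots,L_I\}$ and types from $\{1,\dots,J\}$, only finitely many queue states have total length below that threshold, and collecting them into the finite exceptional set proves the lemma.

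The hard part will be this last step: establishing the uniform linear lower bound $Z_j\geq c\,Q_j$ and, more importantly, justifying the existence of a feasible configuration serving the heaviest type $j^\star$ (so that $M$ inherits the growth of $Z_{j^\star}$). This needs the mild structural assumption that every single job is schedulable, i.e. the unit vectors $e_j$ lie in $S$, or are dominated by elements of $S$; without it a type could accumulate while no configuration drains it and $M$ would not track $\max_j Z_j$. I would state this assumption explicitly and, if necessary, treat the boundary regime (e.g. $\beta_j$ very small when $\kappa_j$ dominates and jobs are long) carefully when pinning down the constant $c$.
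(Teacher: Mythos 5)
Your proposal is correct and shares the paper's central idea: use the $\epsilon$-slack between $a$ and the boundary of $co(S)$ to get $N\cdot Z\geq(1+\epsilon')\,a\cdot Z$ for the best configuration, and then argue that the leftover term $\epsilon' N\cdot Z$ dominates the constant $K$ outside a finite set of states. Where you diverge is in the execution of both halves. For the first half, the paper intersects the ray through $a$ with a face of $co(S)$ and selects one of the configurations spanning that face, whereas you simply expand $a$ as a convex combination and bound $a\cdot Z\leq(1-\epsilon)M$ with $M=\max_{N\in S}N\cdot Z$; your version is cleaner and avoids the paper's slightly awkward componentwise claim $a'\geq(1+\epsilon)a$. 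For the finiteness half, the paper gives a compactness-and-directions argument (monotonicity of $Z$, minimal vectors $q_0$ per direction, boundedness over a compact set of directions), which is rather sketchy; you instead establish the concrete linear bound $Z_j\geq c\,Q_j$ and deduce $M\geq(c/J)\sum_j Q_j$, so that only states with small total queue length can fail --- a more elementary and more checkable route. The ``structural assumption'' you flag (that some $N\in S$ serves each type $j^\star$) is genuinely needed but is not an extra hypothesis: if every $N\in S$ had $N_j=0$ for some type $j$ with $\lambda_j+\kappa_j>0$, then $a_j>0$ while every point of $co(S)$ has $j$-th coordinate zero, contradicting $a\in(1-\epsilon)\,co(S)$; and types with $\lambda_j+\kappa_j=0$ never accumulate queue. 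Adding that one observation closes the only gap in your argument, and with it your proof is complete and arguably tighter than the paper's.
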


\begin{proof}
We remind that vector $\alpha$ is chosen in such a way that the vector $(1+\epsilon)a$ lies inside convex hull of set $\mathcal{S}$\pg{, further denoted by $co(S)$} (set $\mathcal{S}$ is the set of feasible server configurations $N$), where
$$a_j = (\lambda_j + \kappa_j)((1-\alpha_j) + \alpha_j(\dfrac{\lambda_j}{\lambda_j + \kappa_j} + E(\dfrac{1}{L}))) \ .$$

Let $a'$ be a vector (a point) corresponding to intersection of vector $a$ with a face $F$ of $co(S)$. $a' \geq (1+\epsilon)a$ is a linear combination of some feasible configurations $N^{(1)},\dots,N^{(k)}$ (configurations on the face $F$). Therefore for any non-negative vector $Z$ there exists a configuration $N$ such that $N \cdot Z \geq a' \cdot Z$ (at least one of $N^{(1)},\dots,N^{(k)}$ is such a configuration). So $N \cdot Z \geq (1+\epsilon)a \cdot Z$, thus $N \cdot Z - \frac{\epsilon}{1+\epsilon}N \cdot Z \geq a \cdot Z$.

Consider the set of queue states $\mathcal{Q}'=\{q: \exists N \ \frac{\epsilon}{1+\epsilon}N \cdot Z(q) \geq K\}$.
\begin{equation}
\forall q \in \mathcal{Q}' \exists N \in S \ N \cdot Z(q) \geq K + a \cdot Z(q) \ .
\end{equation}

In order to complete the proof of the lemma, it remains to show that $\mathcal{Q}-\mathcal{Q}'$ is finite. 

Note that $Z_j$ is a monotonically increasing function and if direction of $q$ is same as $q'$ then direction of $Z(q)$ is same as $Z(q')$. For each possible direction of $Z(q)$, for all $q \in \mathcal{Q}$ different from $\overrightarrow{0}$, there exists $q' \in \mathcal{Q}'$ such that $Z(q')$ is in the same direction as $Z(q)$ ($q'$ may be some multiplicity of $q$ that is large enough to ``kill'' constant $K$). Then for any $q'' \geq q'$ we have $Z(q'') \geq K + b \cdot Z(q'')$.

For each possible direction of $Z$ we can take the minimum vector $q_0 \in \mathcal{Q}'$ such that $Z(q_0)$ is in the considered direction. Then we take one vector $q_0'$ that is greater than all $q_0$'s for each direction. 

In each direction there is a finite number of vectors smaller than the corresponding 
$q_0$. 
The lengths of vectors $q_0$ are bounded, as the optimized function is continuous 
and considered on a compact set of directions.

Hence the number of configurations of smaller length than the supremum $q_0$
is finite, and so $\mathcal{Q}-\mathcal{Q}'$ is finite. 

\end{proof}

\begin{proof}[Proof of Lemma~\ref{lem:finite}]
Note that in Lemma~\ref{lem:finite}, $E[\sum_j[Z_j \cdot Z_j(A)]]$ is the expected value of the scalar product $Z \cdot Z(A)$ and $E[\sum_j[[Z_j \cdot Z_j(Alg) ]$ is the expected value of the scalar product $Z \cdot Z(Alg)$. According to Lemma~\ref{lem:aux}, inequality from lemma~\ref{lem:finite} is true for almost all queue states, i.e., there exists a finite set of states $\mathcal{F}$ such that for all $q \in \mathcal{Q}-\mathcal{F}$ the desired inequality holds.
\end{proof}

\begin{proof}[Proof of Theorem~\ref{stability}]
According to Lemma~\ref{lem:finite} and Theorem~\ref{th:Foster}, given arrival rates for which there exists some stable algorithm, {\SecureMaxWork} algorithm reaches positive recurrent state in a finite time, therefore it is stable.
\end{proof}

\section{Determining feasible capacity region and scanning frequency}
\label{sec:alg-feasbility}

\subsection{Feasible capacity region}

\begin{lemma}
\label{lemma:processing}
Processing a type-$j$ job for 1 time slot decreases $Z_j(Q(t))$ by 1 on average.
\end{lemma}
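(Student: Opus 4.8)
The plan is to unwind the definition
$$Z_j = X_j + Y_j\Bigl(\tfrac{\lambda_j}{\lambda_j+\kappa_j} + E(1/\ell_j)\Bigr)$$
and track how a single time slot of processing a type-$j$ job --- that is, one invocation of Process\_job($j$) --- changes it. Since Process\_job($j$) performs exactly one of two elementary actions (process one unit of a job contributing to $X_j$, or scan one job contributing to $Y_j$), I would prove the claim by showing that \emph{each} of these two actions decreases $Z_j$ by $1$ in expectation; the random mixture between them (with probabilities $X_j/(X_j+Y_j)$ and $Y_j/(X_j+Y_j)$ chosen in Process\_job($j$)) then automatically also yields an expected decrease of $1$, regardless of the mixing weights. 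The $X_j$-branch (Process\_job\_X) is immediate: $X_j$ drops by $1$ and $Y_j$ is unchanged, so $Z_j$ decreases by exactly $1$, deterministically.

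The substantive case is the $Y_j$-branch (Process\_job\_Y). Scanning one pending job of length $L_i$ removes it from $Y_j$, so $\Delta Y_j = -L_i$; if the scan reveals a genuine job (probability $p := \lambda_j/(\lambda_j+\kappa_j)$) its length migrates into $X_j$, giving $\Delta X_j = +L_i$, whereas a malicious job leaves $X_j$ unchanged. Hence $E[\Delta X_j] = p\,L_i$, and the expected change of $Z_j$ conditioned on scanning a length-$L_i$ job is $p\,L_i - (p + E(1/\ell_j))\,L_i = -L_i\,E(1/\ell_j)$; the two processing-related terms cancel, which is exactly why the coefficient $p$ was built into the $Y_j$-term of $Z_j$. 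It then remains to average over the length of the scanned job. Because a scan consumes one whole job (selection is uniform over jobs, i.e.\ by count rather than by length), the key identity I would invoke is $E[L_i] = 1/E(1/\ell_j)$, which holds since $E(1/\ell_j)$ is precisely the length-biased (size-biased) expectation of $1/L$, and for any length distribution the length-biased mean of $1/L$ equals the reciprocal of the count mean of $L$. Substituting gives $E[\Delta Z_j \mid \text{scan}] = -E(1/\ell_j)\cdot \tfrac{1}{E(1/\ell_j)} = -1$, matching the $X_j$-branch.

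The hard part will be the bookkeeping in the $Y_j$-branch. One must recognise that the coefficient $p + E(1/\ell_j)$ splits cleanly into a post-scan processing part ($p$), which cancels against the genuine-job migration into $X_j$, and a pure scanning part ($E(1/\ell_j)$), whose product with the count-mean job length is $1$ --- equivalently, that $n_Y = Y_j\,E(1/\ell_j)$ counts the jobs in $Y_j$ and a single scan removes exactly one of them. Making this fully rigorous requires stating explicitly that $E(1/\ell_j)$ is the length-weighted reciprocal-length expectation, and that the genuine fraction of the \emph{length} stored in $Y_j$ agrees with the coefficient $p=\lambda_j/(\lambda_j+\kappa_j)$ used to define $Z_j$. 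Once these facts are fixed, both branches contribute an expected $-1$, and the lemma follows.
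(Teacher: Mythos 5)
Your $X_j$-branch and the observation that the coefficient $\lambda_j/(\lambda_j+\kappa_j)$ in the $Y_j$-term is exactly what cancels against the expected migration of genuine length into $X_j$ are both correct and match the paper. The gap is in how you close the $Y_j$-branch. You compute the conditional change as $-L_i\,E(1/\ell_j)$ and then average over the length of the scanned job, invoking $E[L_i]\cdot E(1/\ell_j)=1$. This step does not hold up: the job actually scanned at time $t$ is drawn from whatever happens to sit in $Y_j$ at that moment, and the length composition of $Y_j$ depends on the arrival history and on which jobs the algorithm chose to scan earlier --- it is not a fresh draw from the arrival distribution, so you cannot equate $E[L_{\mathrm{scanned}}]$ with any moment of the arrival length law without a stationarity argument that would be circular here. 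Moreover, under the literal (count-weighted) reading of $E(1/\ell_j)$ that the paper's notation suggests, Jensen's inequality gives $E[L]\,E[1/L]\geq 1$ with equality only for deterministic lengths, so even with the favourable sampling assumption you would get ``decreases by at least $1$'' rather than ``by exactly $1$'' --- and the exact value is what Theorem~\ref{non-stability} later relies on when it identifies $E[Z(Alg)]$ with a point of $co(S)$.

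The paper's proof avoids all of this by doing per-job rather than per-unit-length accounting: a job of length $L_i$ awaiting scanning is deemed to carry weight $\frac{\lambda_j}{\lambda_j+\kappa_j}L_i+1$ (its processing cost if genuine, plus one slot of scanning), and a single scan then changes $Z_j$ by $\frac{\lambda_j}{\lambda_j+\kappa_j}\bigl(1-\frac{\kappa_j}{\lambda_j+\kappa_j}L_i\bigr)+\frac{\kappa_j}{\lambda_j+\kappa_j}\bigl(1+\frac{\lambda_j}{\lambda_j+\kappa_j}L_i\bigr)=1$ in expectation, \emph{for every} $L_i$, with no averaging over lengths and no assumption about which job is selected. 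You actually state this reading yourself in your final paragraph (``$Y_jE(1/\ell_j)$ counts the jobs in $Y_j$ and a single scan removes exactly one of them''), but you relegate it to an equivalence remark instead of making it the argument. Promote that per-job bookkeeping to the main line and drop the length-averaging identity; as written, the proof rests on a distributional claim about the scanned job that is false in general.
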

\begin{proof}
If the processed step was not scanning (i.e., processing a job from $X_j$) then trivially $Z_j$ decreased by 1.

If the processed step was scanning a job of length $L_i$ (i.e., processing a job from $Y_j$), then:
\begin{itemize}
\item before scanning that job contributed $\frac{\lambda_j}{\lambda_j+\kappa_j}\cdot L_i + 1$ weight towards $Z_j$;
\item with probability $\frac{\lambda_j}{\lambda_j+\kappa_j}$ it was a genuine job, so after scanning it contributes $L_i$ towards $Z_j$ (increase in weight);
\item with probability $\frac{\kappa_j}{\lambda_j+\kappa_j}$ it was a malicious job, so after scanning it contributes $0$ towards $Z_j$ (decrease in weight).
\end{itemize} 
Therefore, on average $Z_j$ decreases by 
$$\begin{array}{rl}
& \hspace*{-7em}
\frac{\lambda_j}{\lambda_j+\kappa_j}(1-\frac{\kappa_j}{\lambda_j+\kappa_j}L_i) + \frac{\kappa_j}{\lambda_j+\kappa_j}(1+\frac{\lambda_j}{\lambda_j+\kappa_j}L_i) = \\
= & 1 + \frac{\lambda_j \kappa_j}{(\lambda_j+\kappa_j)^2}(-L_i + L_i) = 1 \ .
\end{array}$$%
\end{proof}

Recall that $\alpha_{i,j}$ is the probability of scanning type-$j$ job of length $L_i$, 
and $A(t)$ denotes the vector of arrival lengths for each type of job, with distinction between jobs that will be scanned and jobs that will not be scanned in the beginning of time slot $t$. 

Let $a_j = a_j(\alpha, \lambda, \kappa) = E[Z_j(A(t))]$ be the expected weight of 
type-$j$ jobs that arrive per time slot 
(arrivals are i.i.d. across time slots, so $E[A_j(t)]=E[A_j(t+1)]$ for all $t$). 
Then 
$$a_j = \sum_i p_{i,j}[(\lambda_j + \kappa_j)(1-\alpha_{i,j}) + \alpha_j(\lambda_j + \kappa_j)(\dfrac{\lambda_j}{\lambda_j + \kappa_j} + \dfrac{1}{L_i})],$$ 
where $p_{i,j}$ is the probability that type-$j$ job has length $L_i$. Addend ($\lambda_j + \kappa_j)(1-\alpha_{i,j})$ corresponds to the weight of good and malicious jobs that will not be scanned (so it contributes to $X_j$). Addend $\alpha_j (\lambda_j + \kappa_j) \cdot \frac{\lambda_j}{\lambda_j + \kappa_j}$ corresponds to the weight of good jobs that will be scanned but without scanning taken into account yet (so it contributes to $Y_j$). Addend $\alpha_j (\lambda_j + \kappa_j)\cdot 1/L_i$ corresponds to the weight of scanning good and malicious jobs (so it also contributes to $Y_j$).
Let $a = (a_1, \dots, a_J)$.

\begin{theorem}
\label{non-stability}
If arrivals $\lambda$ and $\kappa$ are such that for all vectors $\alpha$ arrivals $a \notin co(\mathcal{S})$ then no algorithm is stable.
\end{theorem}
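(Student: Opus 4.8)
The plan is to argue by contradiction: I assume some algorithm $\mathcal{A}$ is stable for the given arrivals $\lambda,\kappa$ and derive that its long-run behaviour must realize an arrival work-rate $a(\alpha^\ast)$ lying in $co(\mathcal{S})$ for some scanning vector $\alpha^\ast$, contradicting the hypothesis that $a(\alpha)\notin co(\mathcal{S})$ for every $\alpha$. The engine of the proof is a work-conservation (fluid/time-averaging) argument rather than a single-slot drift computation, because for a general algorithm the quantities $X_j,Y_j$, and hence $Z_j$, used by {\SecureMaxWork} need not even be well defined.

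First I would fix a horizon $T$ and record, for each length class $i$ and type $j$, the empirical fraction $\bar\alpha_{i,j}(T)\in[0,1]$ of the type-$j$, length-$L_i$ jobs arriving in $[0,T]$ that $\mathcal{A}$ chooses to scan at some point. Since the scan decision is necessarily blind to whether a job is genuine or malicious (this is revealed only by scanning) and arrivals are i.i.d.\ with finite variance, the law of large numbers gives that the expected total type-$j$ \emph{work} (processing units plus scans) demanded by the arrivals in $[0,T]$ equals $a_j(\bar\alpha(T))\,T+o(T)$, with $a_j(\cdot)$ exactly the function defined just before the theorem; in particular the $\tfrac{\lambda_j}{\lambda_j+\kappa_j}$ split in that formula is the LLN limit of the genuine fraction among scanned jobs. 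Because $[0,1]^{I\times J}$ is compact I would then pass to a subsequence $T_k\to\infty$ along which $\bar\alpha(T_k)\to\alpha^\ast$.

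Next I would compare work supplied against work demanded. In each slot $t$ the jobs that $\mathcal{A}$ processes or scans simultaneously form, by definition of feasibility, a configuration $M(t)\in\mathcal{S}$ (scanning a type-$j$ job consumes the same resources as running it), so the time-average $\tfrac1{T}\sum_{t\le T}M(t)$ lies in the compact convex polytope $co(\mathcal{S})$. Stability, i.e.\ $\limsup_t E[\sum_j Q_j(t)]<\infty$, forces rate stability: the expected backlog stays bounded, so over $[0,T_k]$ the expected work performed equals the expected work demanded up to $o(T_k)$. Letting $k\to\infty$ therefore yields $a(\alpha^\ast)=\lim_k \tfrac1{T_k}\sum_{t\le T_k}M(t)\in co(\mathcal{S})$, since a limit of points of the closed set $co(\mathcal{S})$ stays in $co(\mathcal{S})$. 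This contradicts the hypothesis applied to $\alpha=\alpha^\ast$, finishing the proof.

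The step I expect to be the main obstacle is the middle one: rigorously turning the qualitative definition of stability (bounded expected queue) into the quantitative statement ``expected work performed $=$ expected work demanded $+\,o(T)$'' for an \emph{arbitrary}, possibly adaptive and randomized, algorithm, and simultaneously justifying that the empirical scan fractions feed into precisely the formula $a_j(\cdot)$. Care is needed because scanning and service decisions may be correlated with the history; the resolution is that, conditioned on type and length, the scan decision is independent of the hidden genuine/malicious label, so the genuine fraction among scanned jobs still concentrates at $\tfrac{\lambda_j}{\lambda_j+\kappa_j}$, while the finite-variance assumption controls the $o(T)$ error terms.
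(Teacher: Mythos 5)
Your proposal is correct in substance but takes a genuinely different route from the paper. The paper's proof is a one-step drift argument on the potential $Z(Q(t))$: using Lemma~\ref{lemma:processing} it writes $E[Z(Q(t+1))-Z(Q(t))]=a-E[Z(Alg(t+1))]$ with $E[Z(Alg(t+1))]=N(t+1)\in co(\mathcal{S})$, and concludes that since $a\notin co(\mathcal{S})$ the weight vector keeps drifting outward, so some $Z_j$ (hence $Q_j$) is unbounded. You instead assume stability and run a time-averaging/work-conservation argument: empirical scan fractions $\bar\alpha(T)$, law of large numbers for the demanded work rate $a(\bar\alpha(T))$, the observation that each slot's service is a configuration in $\mathcal{S}$ so the time average lies in $co(\mathcal{S})$, and compactness to extract $\alpha^\ast$ with $a(\alpha^\ast)\in co(\mathcal{S})$. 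Your route is longer but buys real generality: the paper's proof tacitly fixes a product-form scanning vector $\alpha$ for the adversary algorithm (so that $a$ and $Z$ are even defined for it), whereas your empirical-fraction device --- combined with linearity of $a_j$ in $\alpha_{i,j}$ and the independence of the scan decision from the hidden label --- covers arbitrary adaptive, randomized policies, which is what the theorem as stated demands; the paper's version is shorter and reuses Lemma~\ref{lemma:processing}. Two details you should make explicit: (i) work performed is only dominated by the averaged configuration, so you obtain $a(\alpha^\ast)\le M^\ast$ coordinatewise for some $M^\ast\in co(\mathcal{S})$ and must invoke the downward-closedness of $co(\mathcal{S})$ (stated in Section~\ref{subsec:opt_scan_freq}) to conclude membership; (ii) $\bar\alpha(T)$ is random, so take expectations first (legitimate since $a$ is affine in $\alpha$) before extracting the convergent subsequence. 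Neither is a gap in the idea; and, for comparison, the paper's own closing step (``growing in the direction of $a$, therefore some $Z_j$ is unbounded'') itself needs a nonnegative separating hyperplane between $a$ and $co(\mathcal{S})$ to be fully rigorous.
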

\begin{proof}
Consider arrival rates $\lambda$, $\kappa$ and scanning probabilities $\alpha$ such that $a \notin co(\mathcal{S})$. 

We will show that for every algorithm there exists $j$ such that $E[Z_j(Q(t))]$ is unbounded.

In every time slot $t$ the weight of queues $Z(Q(t))$ is changing on average by
$E[Z(Q(t+1)) - Z(Q(t))] = E[Z(Q(t)+A(t+1)-Alg(t+1)) - Z(Q(t))] = E[Z(A(t+1))-Z(Alg(t+1))] = a - E[Z(Alg(t+1))] $.

Note that $a \notin co(S)$, while $E[Z(Alg(t+1)] = N(t+1) \in co(S)$ (according to Lemma \ref{lemma:processing}).
If we consider multiple time slots and any combination of algorithm decisions $N(t)$ then the vector of weights of queues $Z(Q(t))$ is growing in the direction of vector 
$a$. Therefore there exists $j$ such that $Z_j(Q(t))$ is unbounded with regard to $t$. Therefore $Q_j(t)$ is unbounded with regard to $t$, which is contradictory with our definition of stability.
\end{proof}


\subsection{Optimal scanning frequencies}
\label{subsec:opt_scan_freq}

\begin{theorem}

If there exists a vector of scanning frequencies $\alpha^{(0)}$ such that given job arrival rates $\lambda$ and $\kappa$ are inside the capacity region (i.e., $(1+\epsilon)a(\alpha^{(0)},\lambda, \kappa) \in co(S)$ as defined in section \ref{sec:alg-feasbility}), then there exists a vector of scanning frequencies $\alpha^{(1)} \in \{0,1\}^{I \times J}$ such that these job arrivals are inside the capacity region ($(1+\epsilon)a(\alpha^{(1)},\lambda, \kappa) \in co(S)$).
\end{theorem}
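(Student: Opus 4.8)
The plan is to exploit two structural facts: the load vector $a(\alpha,\lambda,\kappa)$ is an affine function of $\alpha$ that decomposes across job classes, and the feasible region $co(\mathcal{S})$ is downward closed. Together these let me round the fractional witness $\alpha^{(0)}$ to an integral vector $\alpha^{(1)}\in\{0,1\}^{I\times J}$ whose associated load is no larger coordinate-wise, hence still inside the capacity region.

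First I would make the dependence of $a$ on $\alpha$ explicit. Collecting the coefficient of each $\alpha_{i,j}$ in the formula for $a_j$ from Section~\ref{sec:alg-feasbility} gives
$$a_j(\alpha,\lambda,\kappa) = \sum_i p_{i,j}(\lambda_j+\kappa_j) + \sum_i p_{i,j}\left(\frac{\lambda_j+\kappa_j}{L_i}-\kappa_j\right)\alpha_{i,j}\ .$$
Two observations follow immediately: $a_j$ depends only on the variables $\{\alpha_{i,j}\}_i$ attached to its own type $j$, so the coordinates of $a$ can be optimized independently; and within each coordinate it is affine and separable over $i$. Consequently the coordinate-wise minimum of $a$ over the box $[0,1]^{I\times J}$ is attained at a vertex of the box. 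I would define $\alpha^{(1)}_{i,j}=1$ when the coefficient $\frac{\lambda_j+\kappa_j}{L_i}-\kappa_j$ is negative (equivalently $L_i>1+\lambda_j/\kappa_j$, so scanning a sufficiently long job pays off) and $\alpha^{(1)}_{i,j}=0$ otherwise. By construction $\alpha^{(1)}\in\{0,1\}^{I\times J}$ and $a(\alpha^{(1)},\lambda,\kappa)\le a(\alpha,\lambda,\kappa)$ coordinate-wise for every $\alpha\in[0,1]^{I\times J}$, in particular for $\alpha^{(0)}$; one also checks $a(\alpha^{(1)},\lambda,\kappa)\ge 0$, since each surviving term is non-negative.

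Next I would establish that $co(\mathcal{S})$ is downward closed, the step I expect to carry the real content. The set $\mathcal{S}$ itself is downward closed: if the servers can run $N_j$ type-$j$ jobs simultaneously within their resource budgets, they can run any $N'\le N$, so $N'\in\mathcal{S}$. For the hull I would argue that whenever $v\in\mathcal{S}$ the whole box $[0,v]=\{z:0\le z\le v\}$ lies in $co(\mathcal{S})$: its corners are integer points dominated by $v$, hence members of $\mathcal{S}$ by downward closedness, and the box is their convex hull. Then for any $x\in co(\mathcal{S})$ written as $x=\sum_k \mu_k v^{(k)}$ and any $0\le y\le x$, I set $t_j=y_j/x_j$ (taking $t_j$ arbitrary when $x_j=0$, which forces $y_j=0$) and $w^{(k)}_j=t_j v^{(k)}_j$, so that $w^{(k)}\in[0,v^{(k)}]\subseteq co(\mathcal{S})$ and $\sum_k \mu_k w^{(k)}=y$; convexity then gives $y\in co(\mathcal{S})$.

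Finally I would assemble the pieces. By hypothesis $(1+\epsilon)a(\alpha^{(0)},\lambda,\kappa)\in co(\mathcal{S})$, and by the rounding step $0\le(1+\epsilon)a(\alpha^{(1)},\lambda,\kappa)\le(1+\epsilon)a(\alpha^{(0)},\lambda,\kappa)$ coordinate-wise; downward closedness of $co(\mathcal{S})$ then yields $(1+\epsilon)a(\alpha^{(1)},\lambda,\kappa)\in co(\mathcal{S})$, which is exactly the claim. The main obstacle is the downward-closedness argument, since $\mathcal{S}$ is discrete while $co(\mathcal{S})$ is a continuous polytope; the box-containment observation is what bridges that gap, and I would isolate it as a small separate claim. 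A secondary point to watch is degenerate classes (e.g.\ $\kappa_j=0$, or $L_i$ exactly at the threshold) where the coefficient vanishes and either choice of $\alpha^{(1)}_{i,j}$ is optimal; these should be noted but do not affect the argument.
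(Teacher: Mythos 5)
Your proposal is correct and follows essentially the same route as the paper: exploit that each $a_j$ is affine and separable in the scanning frequencies $\alpha_{i,j}$, so its coordinate-wise minimum over $[0,1]^{I\times J}$ is attained at a $\{0,1\}$-vertex (with the same threshold $L_i > 1+\lambda_j/\kappa_j$ as the paper's condition $1 > \lambda_j/(\lambda_j+\kappa_j)+1/L_i$), then conclude via downward closedness of $co(\mathcal{S})$. The only difference is that you supply an explicit proof of the downward-closedness of the convex hull (via the box $[0,v]$ and the rescaling $w^{(k)}_j = t_j v^{(k)}_j$), which the paper merely asserts; this is a welcome addition but not a different argument.
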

\begin{proof}
Recall the following properties that we will be using in the proof.
First, scanning uses same resources as the scanned job.
Second, if vector $N \in S$, then also $N' \in S$ for all $0 \leq N' \leq N$. Therefore, if $a \in co(S)$ then $a' \in co(S)$ for all $0 \leq a' \leq a$. (Inequalities between vectors $a \leq b$ mean that for all $i$, $a_i \leq b_i$.)

Assume vector $a(\alpha^{(0)},\lambda, \kappa) \in co(S)$, where
$a_j(\alpha,\lambda, \kappa) = \sum_i p_i[(\lambda_j + \kappa_j)(1-\alpha_{i,j})
+ \alpha_{i,j}(\lambda_j + \kappa_j)(\dfrac{\lambda_j}{\lambda_j + \kappa_j} + 1/L_i)))]$,
as defined in section \ref{sec:alg-feasbility}. Therefore, function $a_j(\alpha,\lambda, \kappa)$ is independent of $\lambda_k, \kappa_k, \alpha_{i,k}$ for $k \neq j$ and for all $i$. Given fixed $\lambda$ and $\kappa$, $a_j(\alpha,\lambda, \kappa)$ is a linear combination of scanning frequencies $\alpha_{i,j} \in [0,1]$, for all $i$. 

Therefore the vector $\alpha_j = (\alpha_{1,j},\alpha_{2,j},\dots,\alpha_{I,j})$ that minimizes $a_j$ is one of the extreme points of region $[0,1]^I$. Furthermore, we can easily compute $\alpha_{i,j}^{(1)}$ for all $i$ that minimize $a_j$, since each summand $p_i[(\lambda_j + \kappa_j)(1-\alpha_{i,j}) + \alpha_{i,j}(\lambda_j + \kappa_j)(\dfrac{\lambda_j}{\lambda_j + \kappa_j} + 1/L_i)))]$ is independent of all other summands. Value $\alpha_{i,j}^{(1)}$ that minimizes this summand is $0$, if $1 \leq \dfrac{\lambda_j}{\lambda_j + \kappa_j} + 1/L_i$, and $1$ if $1 > \dfrac{\lambda_j}{\lambda_j + \kappa_j} + 1/L_i$, for each $i,j$, and can be computed using Algorithm~\ref{alg:frequencies}.

This gives $a(\alpha^{(0)},\lambda, \kappa) \geq a(\alpha^{(1)},\lambda, \kappa))$, which means that $(1+\epsilon)a(\alpha^{(1)},\lambda, \kappa)) \in co(S)$.
\end{proof}

\begin{algorithm}
\begin{algorithmic}
\caption{Optimal\_scanning\_frequencies($\lambda, \kappa$)}

\ForAll{$i,j$}
	\If {$\lambda_{i,j}/(\lambda_{i,j}+\kappa_{i,j}) + 1/L_i \geq 1$}
		\State $\alpha_{i,j}\gets 0$
	\Else
		\State $\alpha_{i,j}\gets 1$
	\EndIf
\EndFor

\label{alg:frequencies}
\end{algorithmic}
\end{algorithm}

\section{Decentralization}
\label{sec:decentralized}

Centralized scheduler uses the same queue for all type-$j$ jobs that are waiting in the system. In decentralized approach each server maintains its own queues for jobs of type-$j$, therefore when a job arrives a decision has to be made as to which server to route the job. In this section we specify and analyze six different decentralized implementations of the main algorithm {\SecureMaxWork} from Section~\ref{sec:algorithm}with different routing procedures.

Another distinct property of decentralized approach is using the main {\SecureMaxWork} algorithm at each server to make scheduling decision. 

In~\cite{MaguluriS14} the authors introduced the notion of \emph{refresh time}.  
A time slot $t$ is a  \emph{global refresh time} if there is no job currently in queues in all the servers at the beginning of $t$. \emph{Local refresh time} occurs when there is no job currently in service at the beginning of time slot $t$ in a server $v$. 
Note that a global refresh time happens when all servers have their local refresh times.
In practice global refresh times occurs rarely and intuitively it happens to be more rear as the number of servers increase. Due to this phenomenon, all our
decentralized algorithms will rely only on the local refresh times at servers.

{\em Decentralized {\SecureMaxWork} for each server $v$.} Upon local refresh time, server $v$ keeps idle until some of its queues
are non-empty. Then it applies {\SecureMaxWork} with respect of its queues
to find configuration $N'$. Then it keeps scheduling jobs according to this configuration
every round until the next local refresh time (i.e., there is no suitable job to 
apply configuration $N'$. This scheme is repeated.

Below we describe six specification of the decentralized {\SecureMaxWork}
with different routing policies. \ba{Algorithm 1 and 3 were analysed in ~\cite{MaguluriS14}, 4 was analysed in ~\cite{MaguluriSY12}, and we designed 2, 5, and 6 in this work. We will discuss the performance of the new algorithms and their comparisons with the existing ones in the results section.}

{\em Algorithm 1: {\SecureMaxWork}\_JSQ.} Joint Shortest Queue (JSQ) paradigm is used to route a newly arrived job, that is,
it is sent to the server with the queue with the smallest number of jobs of type-$j$,
where $j$ denotes the type of the arrived job. 
This algorithm was analyzed in the context of cloud workload in~\cite{MaguluriS14}.

{\em Algorithm 2: {\SecureMaxWork}\_JSW.} Joint Shortest Work (JSW) is used for routing a newly arrived job; i.e., 
it is sent to the server with the minimum work load of type-$j$, where 
the work load is defined as a sum of lengths of jobs stored in the local queue
of type-$j$.

{\em Algorithm 3: {\SecureMaxWork}\_UR.}  Uniformly Random (UR) routing is used for forwarding newly arrived job: each job that arrives into the system is routed to one of the servers chosen uniformly at random. 
This algorithm was analyzed in the context of cloud workload in ~\cite{MaguluriS14}

{\em Algorithm 4: {\SecureMaxWork}\_RR.}  Round Robin (RR) routine is used for allocating newly arrived jobs to the servers: 
for each type-$j$ there is a pointer going cyclically along the servers, showing
which was the last server to which a type-$j$ job was allocated; when a new type-$j$ job arrives it is sent to the next server (modulo $n$) and the pointer is advanced to that server.

{\em Algorithm 5: {\SecureMaxWork}\_P2Q.}  
Power of two Choices combined with selection of the Shortest Queue (P2Q)
is used for routing a newly arrived job of a type-$j$: two servers are sampled uniformly at random, and the job is routed to the server with the shorter type-$j$ queue.

{\em Algorithm 6: {\SecureMaxWork}\_P2W.} Power of two Choices combined with selection of the Shortest Work load (P2W) is used for routing a newly arrived job of a type-$j$: two servers are sampled uniformly at random, and the job is routed to the server with the smaller workload of type-$j$(i.e., where the total length of type-$j$ jobs in the local queue is shorter).


\section{Simulations}

\label{sec:experiments}

\subsection{Experiment Setting}
\label{sec:exp-setting}

The setup for simulations, described in this section, is based on the one in Maguluri et al.~\cite{MaguluriSY12}.

\paragraph{Servers and VMs}
\ba{We consider a server with 30 GB memory, 30 EC2 computing units and 4000G storage space.} There are $100$ identical servers in the cloud. Arriving jobs are served in the cloud based on three types of virtual machines described in Table~\ref{t:cloud}. 
\ba{This gives three maximal configurations available at each server:}
$(2,0,0)$, $(1,0,1)$ and $(0,1,1)$.

\begin{center}
\begin{table}[!htbp]
\protect\caption{Representation of Instances in Amazon EC2}
\label{t:cloud}
\begin{center}
\begin{tabular}{|c|c|c|c|}
\hline 
Instance type & Memory (GB) & vCPU & Storage (GB)\tabularnewline
\hline 
\hline 
Standard  & 15 & 8 & 1,690\tabularnewline
\hline 
High-Memory & 17.1 & 6.5 & 420\tabularnewline
\hline 
High-CPU & 7 & 20 & 1,690\tabularnewline
\hline 
\end{tabular}
\end{center}
\end{table}
\end{center}

\paragraph{Job arrivals}
We use the generic arrival vector $\lambda^*= 0.99 \cdot (1,1/3,2/3)$ for the genuine users' workload, 
which is located at the border of the server capacity area 
(it is easy to observe that it is a normalized linear
combination of the three maximal configurations, 
additionally re-scaled by factor $0.99$). 

In each time step a job of type $j=1,2,3$ is selected with probability 
$\frac{\lambda^*_j}{130.5}$, and its length is chosen according to the length 
distribution described below with the mean length $130.5$.

Similarly as above, 
 
we define a malicious workload using a generic arrival
vector $\kappa^*=(0.7,0.01,0.01)$, and the procedure of generating a malicious traffic is analogous
as above for generating the genuine users' one. 

Note that each of the arrival rates $\lambda^*$ and $\kappa^*$ is within 
the capacity range of a server, whereas the combined workflow rate $\lambda^*+\kappa^*$
is not.

\paragraph{Job size distribution}

When a new job is generated, with the probability of $0.7$ it is an integer that is uniformly distributed in the interval $[1,50]$, with the probability of $0.15$ it is an integer  uniformly distributed in the interval $[251,300]$, and with the probability of $0.15$ it is an integer uniformly distributed in the interval $[451,500]$.

Note that there are $150$ possible job lengths, and the mean length is $130.5$,
as assumed in the definition of arrival rates.

\paragraph{Set up of simulations.}

Since there are $100$ homogenous servers, 
the overall arrival rates are: $\lambda = 100\cdot \lambda^* = (99,33,66)$
for genuine workload, and $\kappa = 100\cdot \kappa^* = (70,1,1)$
for malicious workload.
The job size distribution is as specified above, same for each job type.
We computed the following optimal scanning vector $\alpha^*$ for this setting,
more precisely, the vector minimizing expected arriving weight:
\begin{itemize}
\item
$\alpha^*_{i,1} = 0$ for $L_i \leq 2$, 
\item
$\alpha^*_{i,2} = 0$ for $L_i \leq 34$, 
\item
$\alpha^*_{i,3} = 0$ for $L_i \leq 50$,
\item
$\alpha^*_{i,j} = 1$ otherwise. 
\end{itemize}

Each execution includes $4,000,000$ time steps.
We monitor the following parameters every $200,000$ time steps:
\begin{itemize}
\item
{\em Average queue size} by the recorded time step;
\item
{\em Maximum queue size} by the recorded time step;
\item
{\em Average latency} by the recorded time step;
\item
{\em Maximum latency} by the recorded time step.
\end{itemize}
In the first part, we output the results of the above measurements 
for centralized protocols: 
\begin{description}
\item[LambdaFlow:] \ \ \ \ \ 
{\SecureMaxWork} applied for genuine flow only, (i.e., only with genuine arrival rate $\lambda$), and
no scanning is applied (i.e., $\vec{\alpha}=\mathbf{0}$);
\item[ScanOPT:] \ \ \ \ \ \
{\SecureMaxWork} applied for simultaneous genuine and malicious flows, 
with scanning defined by vector $\alpha^*$;
\item[ScanALL:] \ \ \ \ \ \
{\SecureMaxWork} applied for simultaneous genuine and malicious flows, 
with scanning all jobs of size bigger than $1$ (i.e., for every $j=1,2,3$, 
$\vec{\alpha_{1,j}}=0$ and $\vec{\alpha_{i,j}}=1$ for every $i>1$);
\item[ScanNONE:] \ \ \ \ \ \
{\SecureMaxWork} applied for simultaneous genuine and malicious flows, 
with no scanning (i.e., $\vec{\alpha}=\mathbf{0}$).
\end{description}
As by theoretical part, it is expected that the first two executions
should be stable while the last one is not.
We expect that the third execution is also not stable, which would 
justify our research quest for searching of suitable scanning vector.
In order to visualize it, we also display differences and ratios between
the second and the third executions --- the stable and the potentially
unstable one.

The second part of simulations is dedicated to decentralized algorithms.
We study how different routing protocols influence stability, when applied
to the {\SecureMaxWork} with the optimally selected scanning vector $\alpha^*$.
We compare the six decentralized implementations with the centralized one,
{\SecureMaxWork}-OPT.
They are denoted by ScanOPT\_JSQ, ScanOPT\_JSW, ScanOPT\_UR, ScanOPT\_RR,
ScanOPT\_P2Q, ScanOPT\_P2W, and ScanOPT, respectively.


\subsection{Results}
\label{sec:results}


\begin{figure*}[!h]
	\centering     
	\includegraphics[scale=0.135]{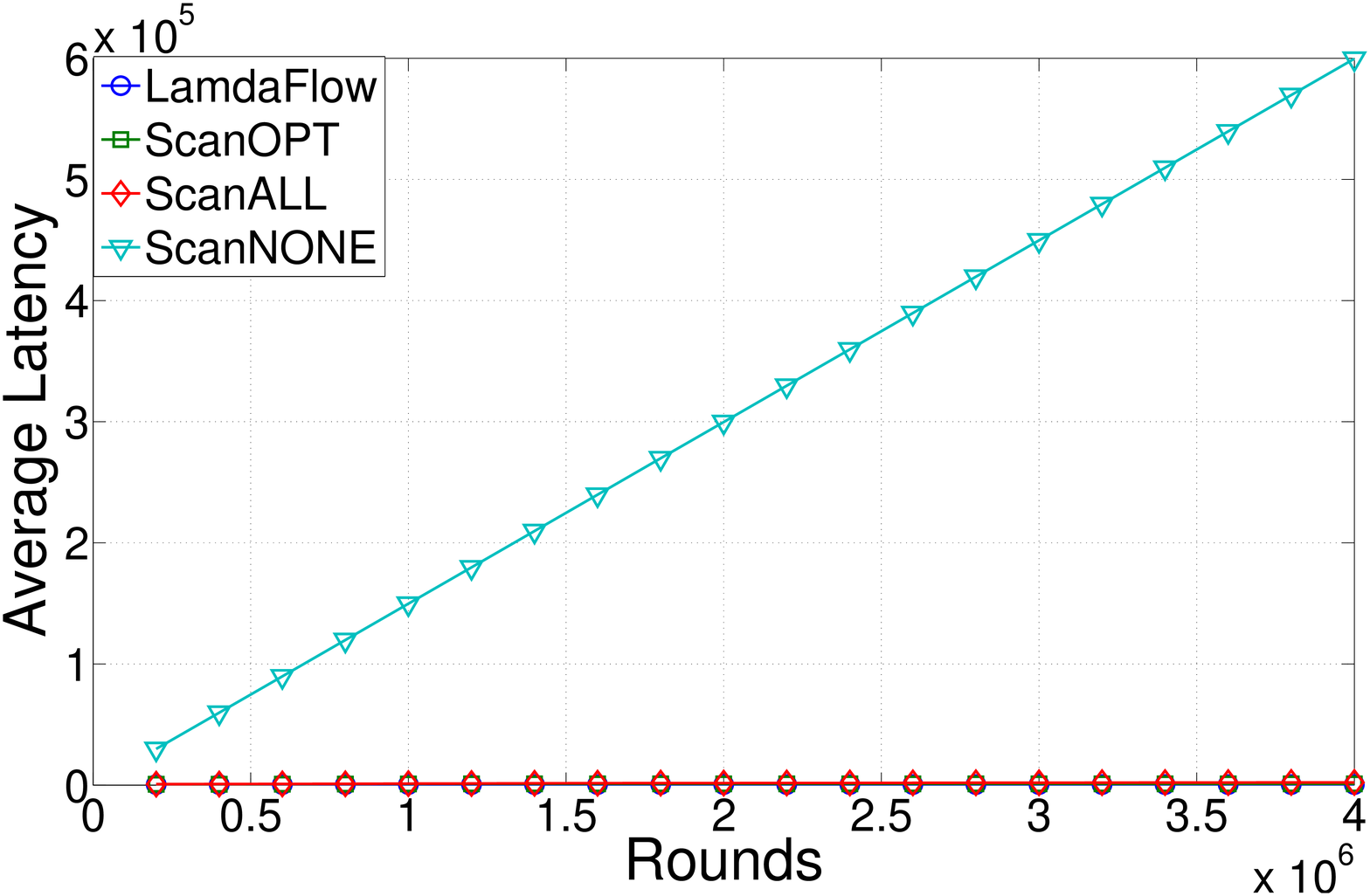}
	\includegraphics[scale=0.135]{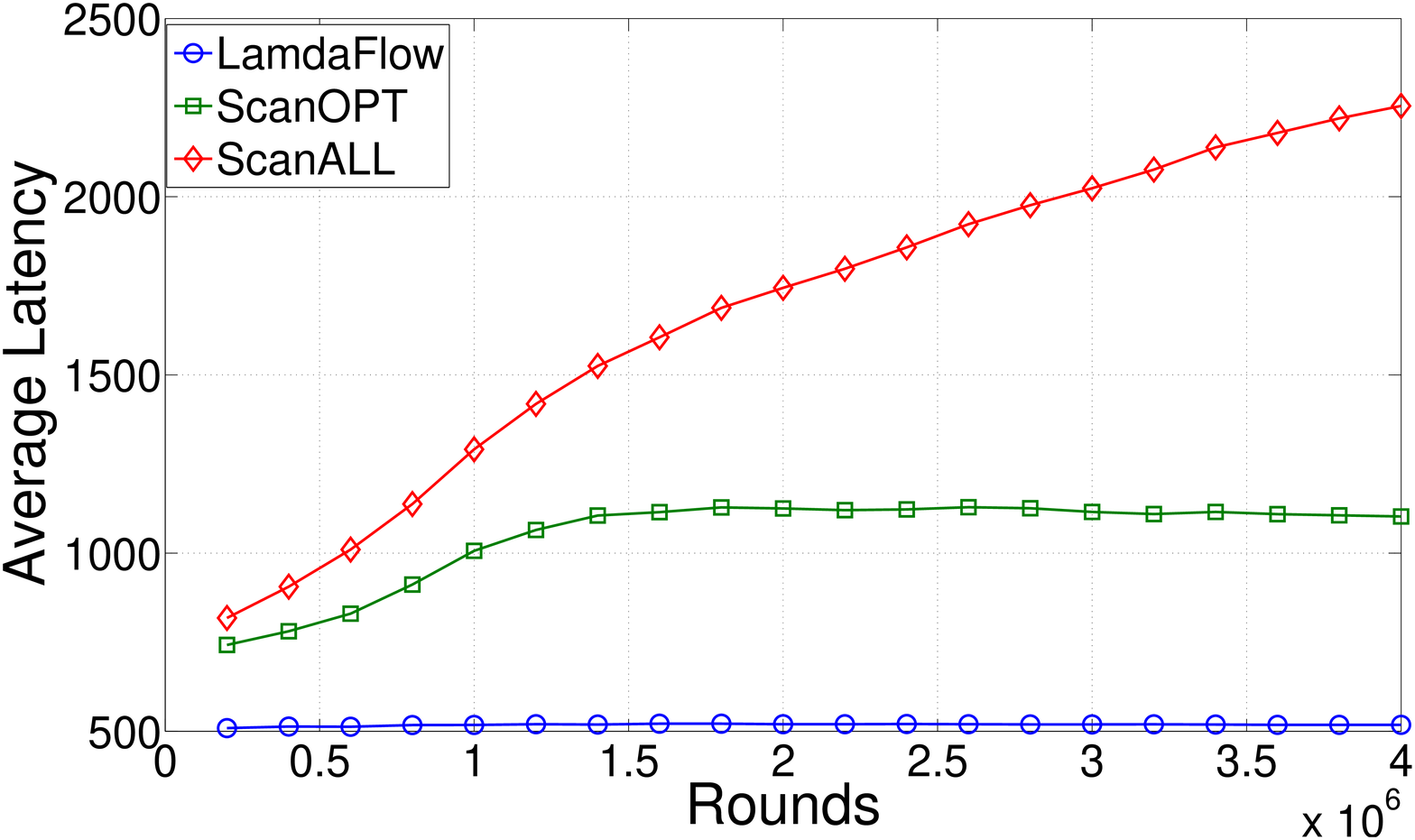}
	\caption{Comparison of average latency of LambdaFlow, ScanOPT, ScanALL and ScanNONE strategies.}
	\label{averageLCentralized}
\end{figure*}

\begin{figure*}[!h]
	\centering     
	\includegraphics[scale=0.135]{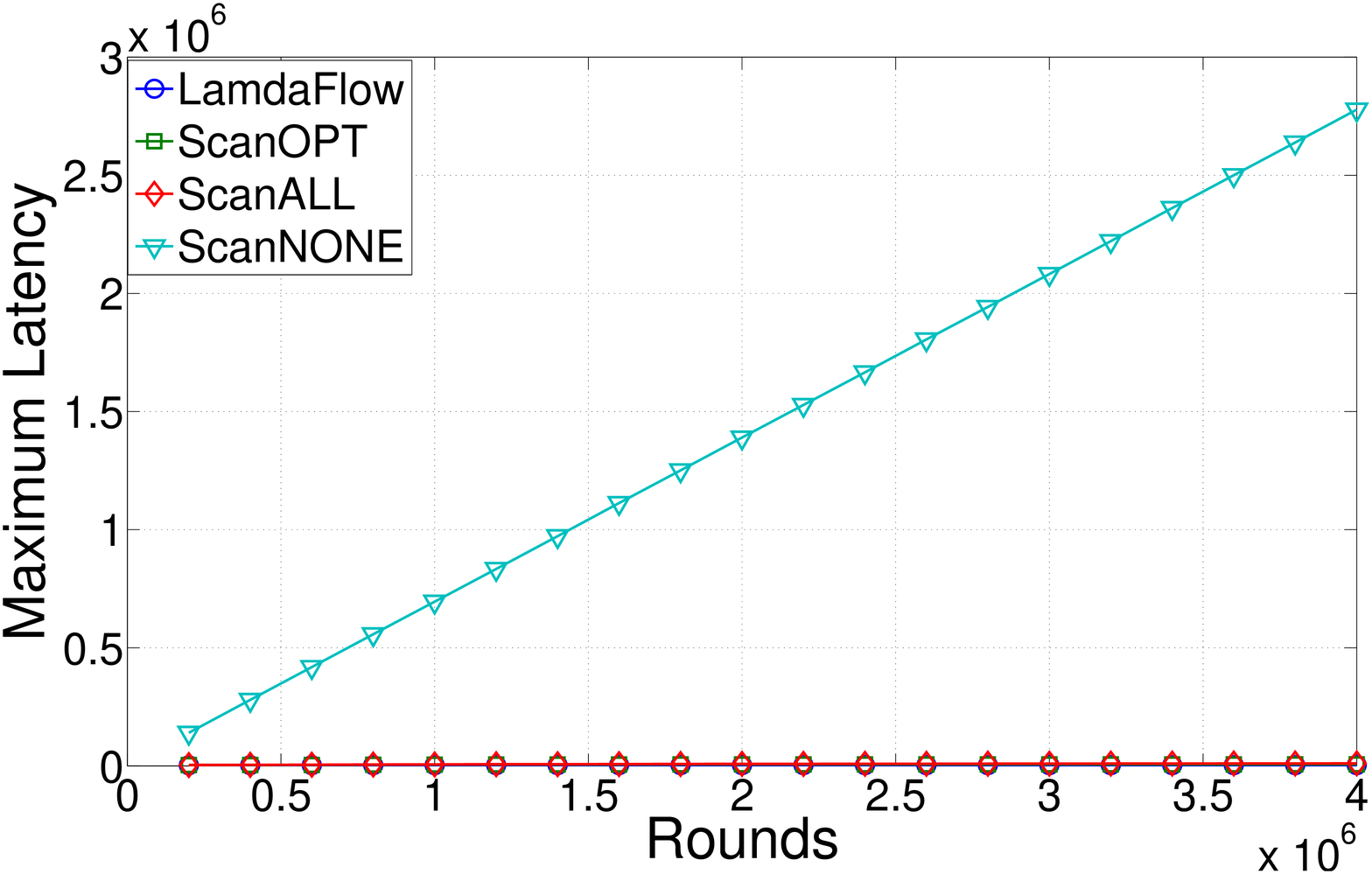}
	\includegraphics[scale=0.135]{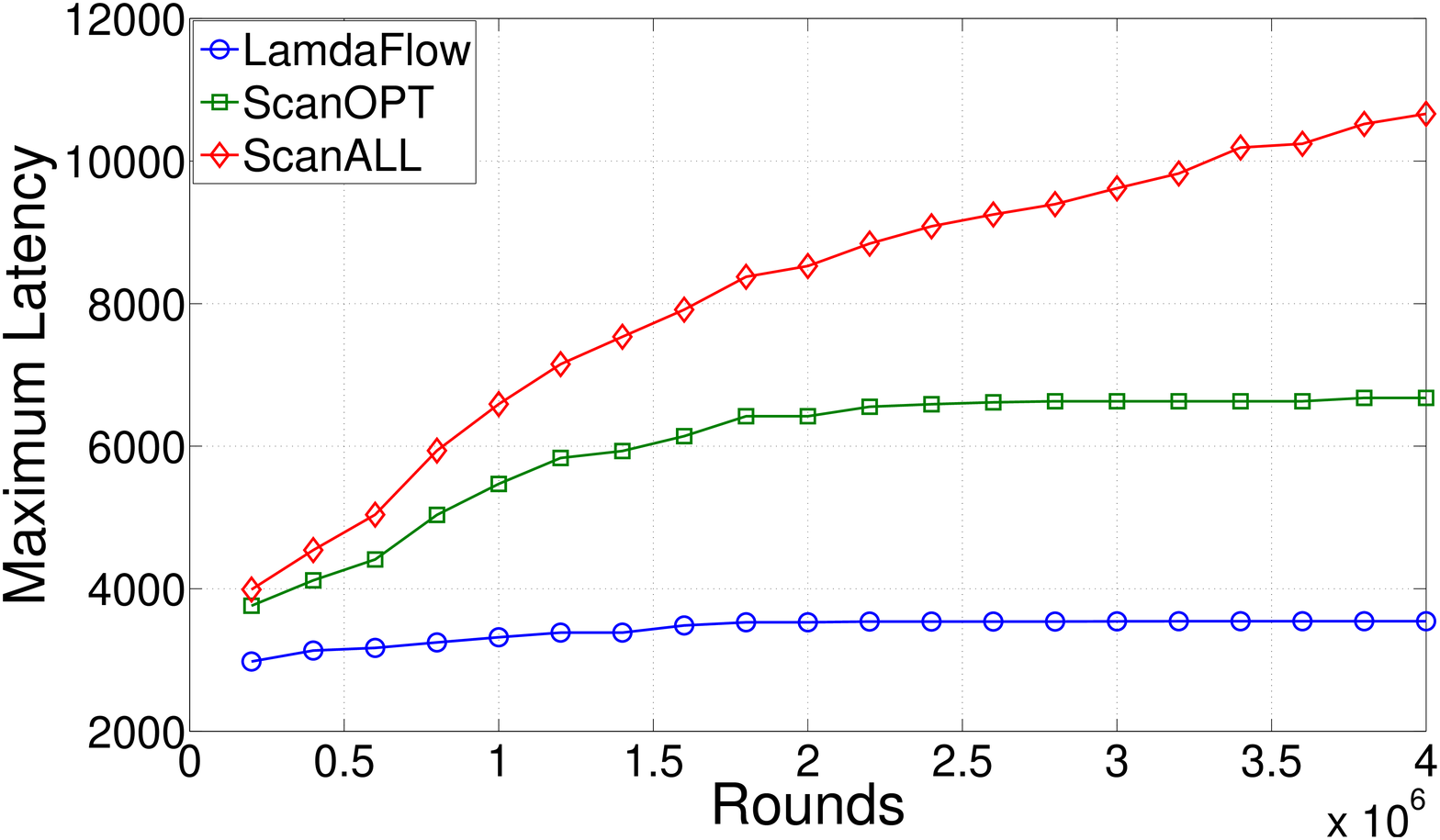}
	\caption{Comparison of maximum latency of LambdaFlow, ScanOPT, ScanALL and ScanNONE strategies.}
	\label{maximumLCentralized}
\end{figure*}

\begin{figure*}[!h]
	\centering    
	\includegraphics[scale=0.135]{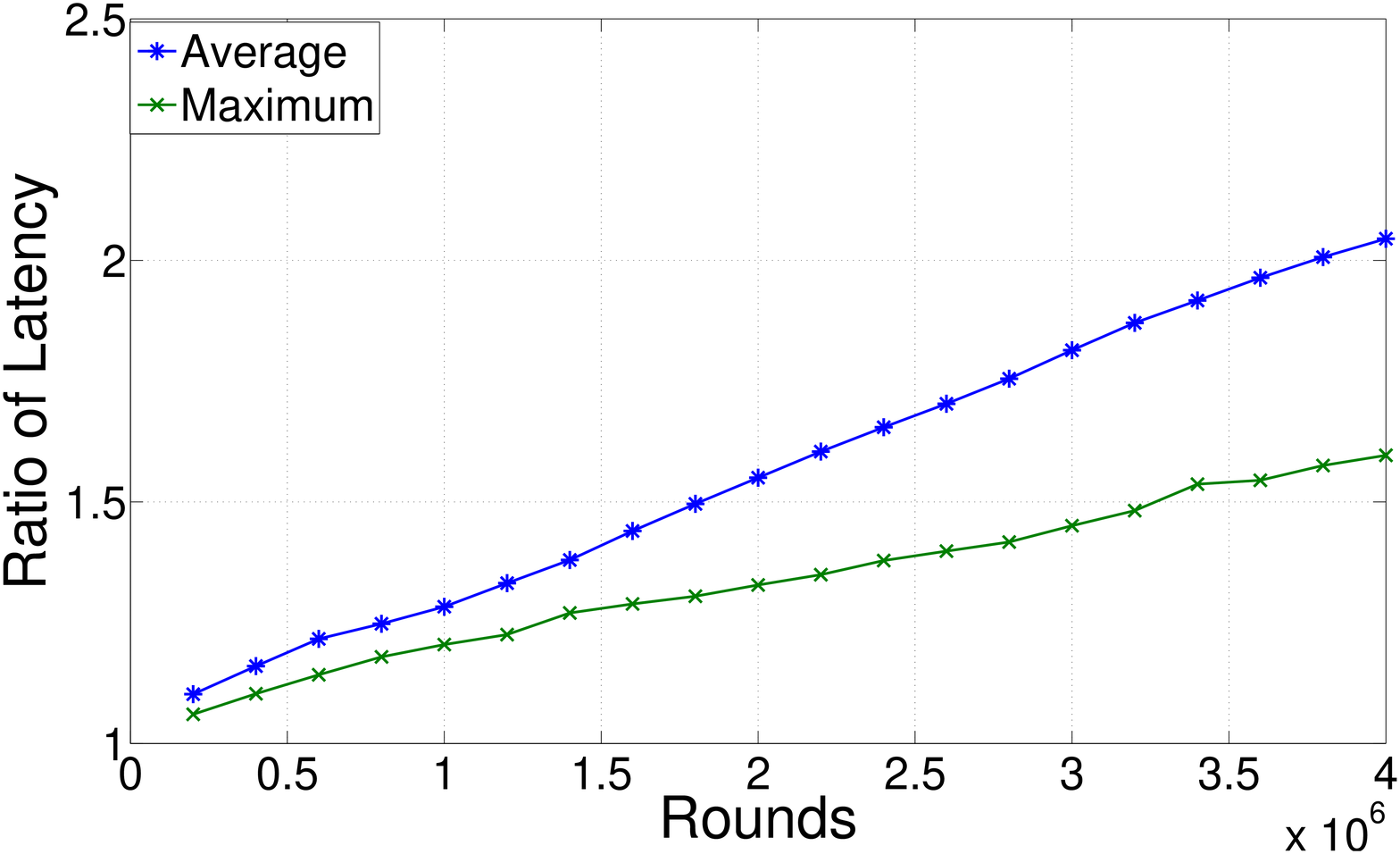}
	\includegraphics[scale=0.135]{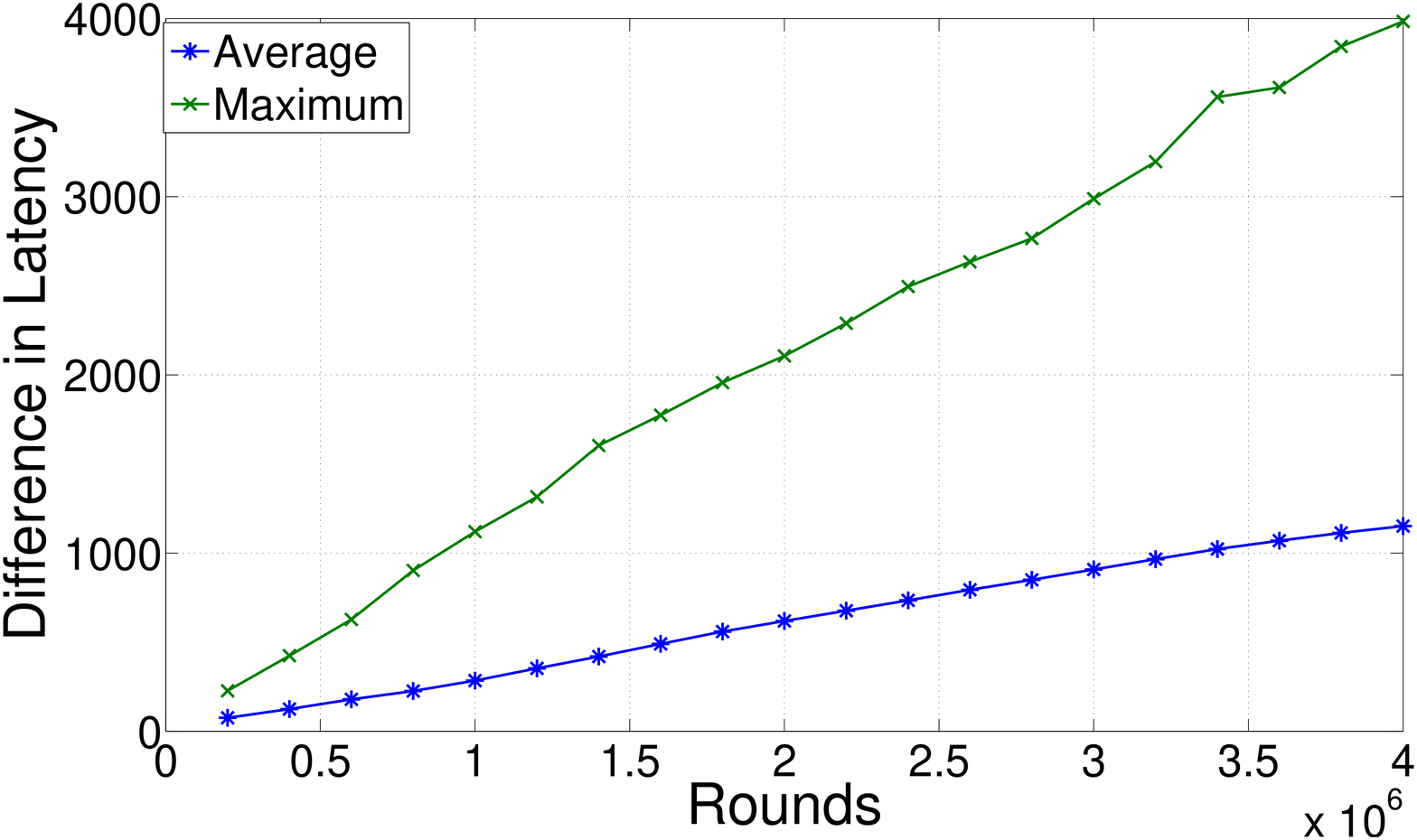}
	\caption{Ratio of ScanAll to ScanOpt latency and difference between ScanAll and ScanOpt latency (indicating ScanALL becomes worse over time).}
	\label{ratioAndDiffL}
\end{figure*}

\subsubsection{Centralized Approaches}

In order to study throughput-optimality of the scanning strategies, we recorded the latency over time for the different scanning strategies used, i.e., ScanALL, ScanOPT and ScanNONE, comparing them with the execution LambdaFlow of the genuine workload only. 
In Figure~\ref{averageLCentralized}, the average latency of the ScanNONE and ScanALL strategies grow rapidly, while it stabilizes for the ScanOPT strategy. The right part of the Figure is the zoomed left part, in order to see clearly the performance of ScanOPT versus ScanALL.
The performance of ScanALL strategy is even worse for maximum latency, where we observed that it increases rapidly; this is shown in Figure~\ref{maximumLCentralized}. This indicates that some jobs will eventually get stuck. In Figure~\ref{ratioAndDiffL} we analyzed the ratio and the difference between ScanALL to ScanOPT latencies, and both are increasing. This indicates that ScanALL is becoming worse over time,
confirming the theory that ScanOPT stabilizes while ScanALL does not
(c.f., Theorem~\ref{stability} and Theorem~\ref{non-stability}, respectively,
applied to the experiment setting of arrival rates $\lambda,\kappa$ and 
scanning vectors $\alpha^*$ and scan-all, respectively).

\begin{figure*}[!h]
	\centering 
	\includegraphics[scale=0.135]{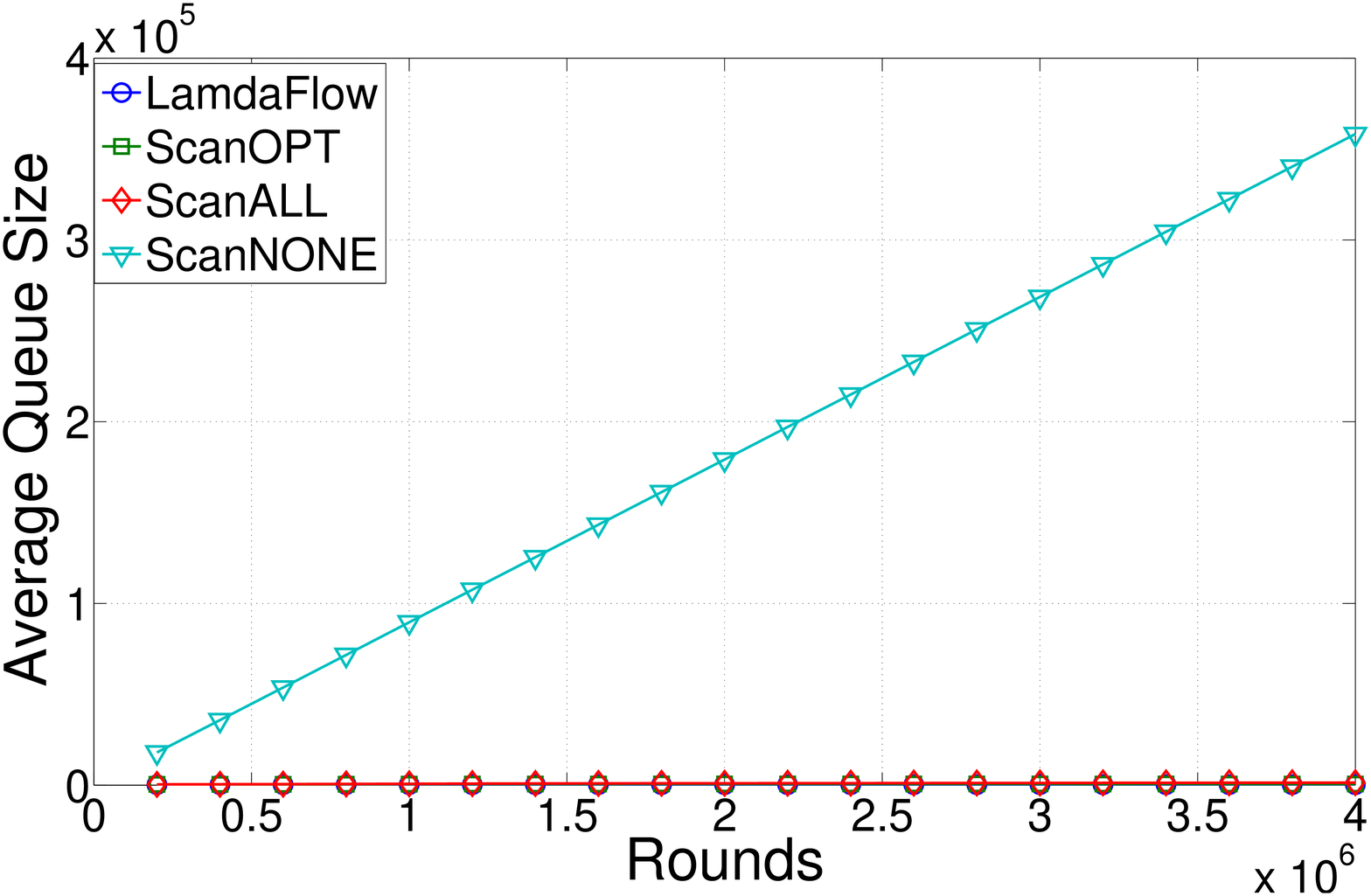}
	\includegraphics[scale=0.135]{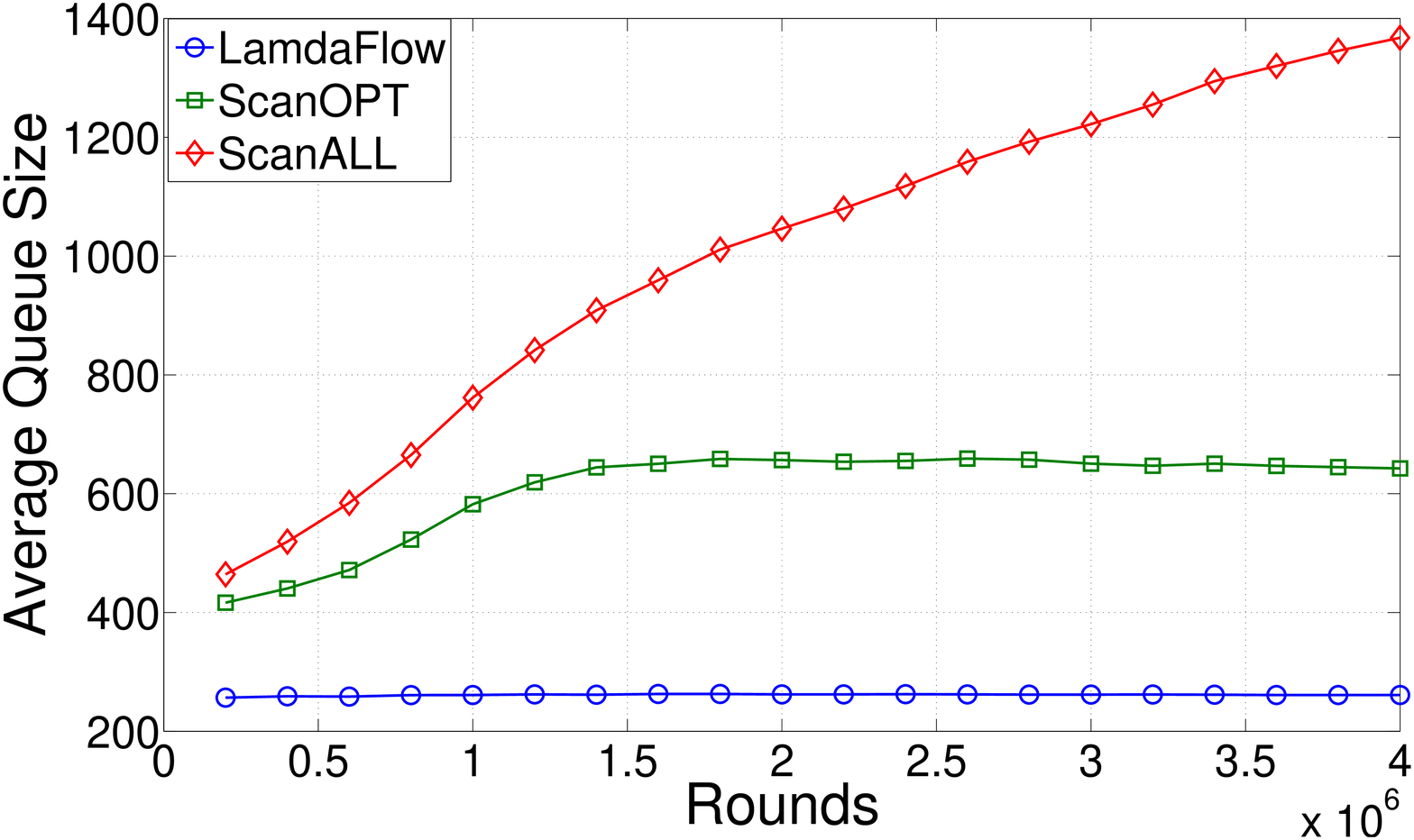}
	\caption{Comparison of average queue sizes of LambdaFlow, ScanOPT, ScanALL and ScanNONE strategies.}
	\label{averageQCentralized}
\end{figure*}

\begin{figure*}[!h]
	\centering 
	\includegraphics[scale=0.135]{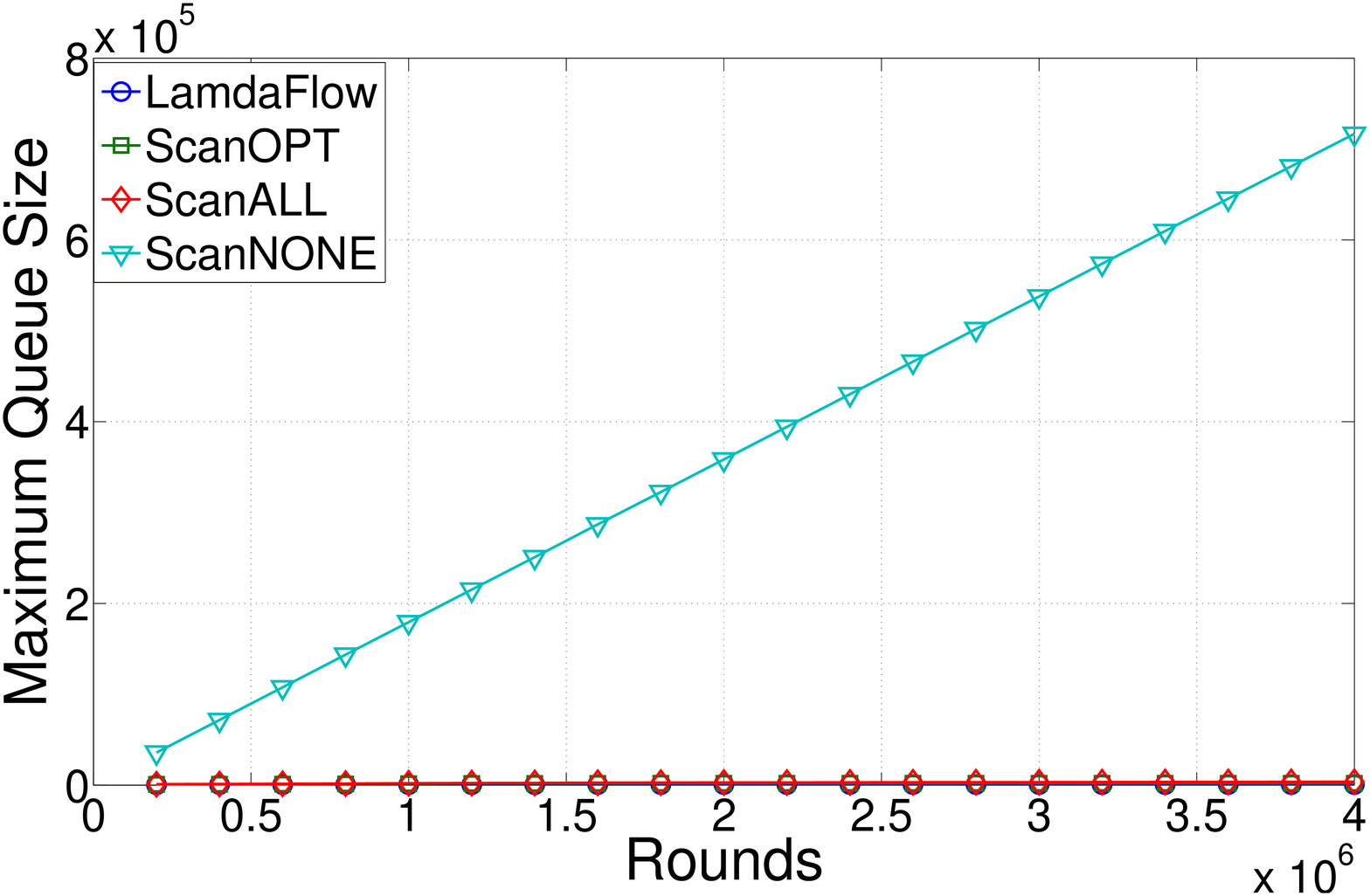}
	\includegraphics[scale=0.135]{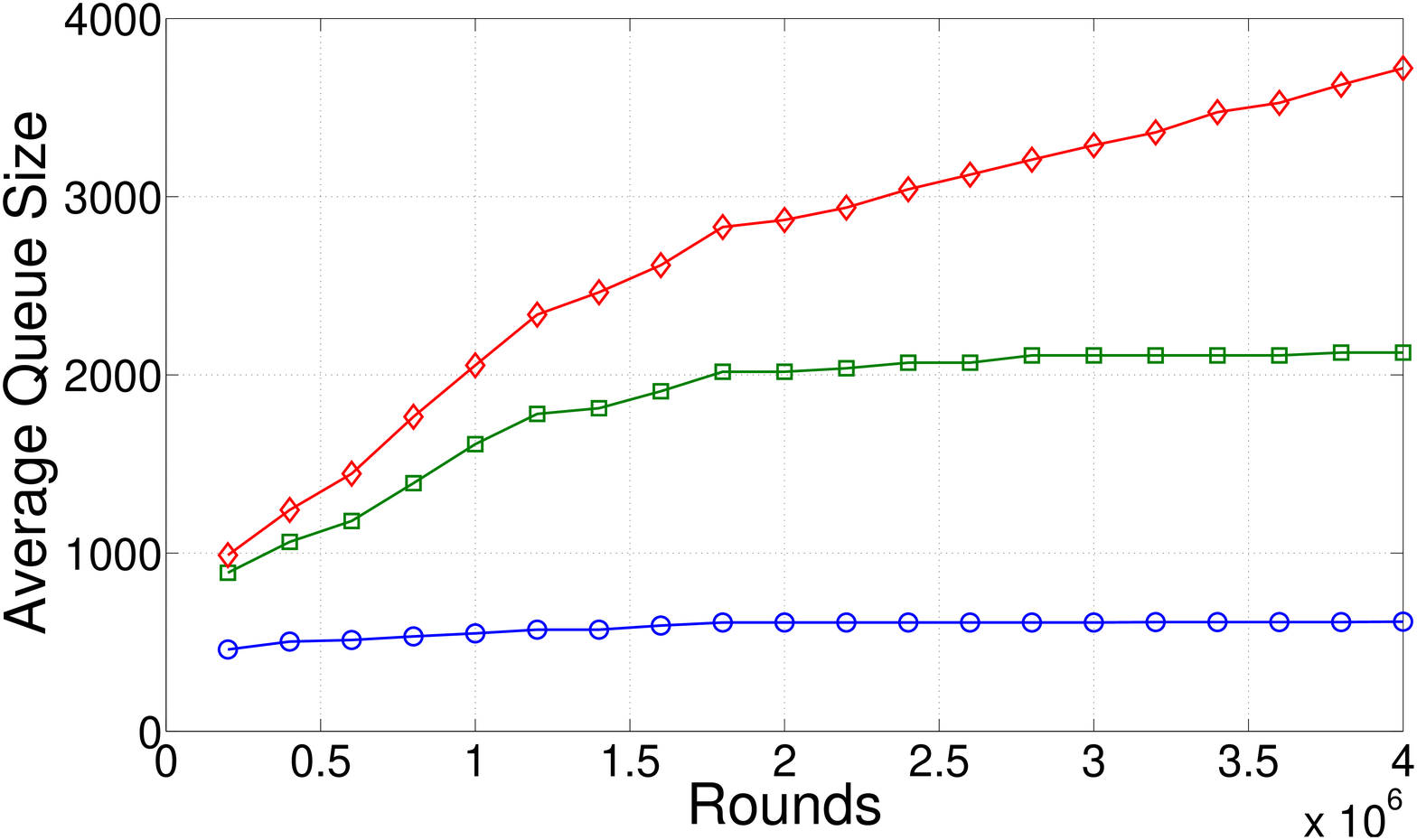}
	\caption{Comparison of maximum queue sizes of LambdaFlow, ScanOPT, ScanALL and ScanNONE strategies.}
	\label{maximumQCentralized}
\end{figure*}

\begin{figure*}[!h]
	\centering 
	\includegraphics[scale=0.135]{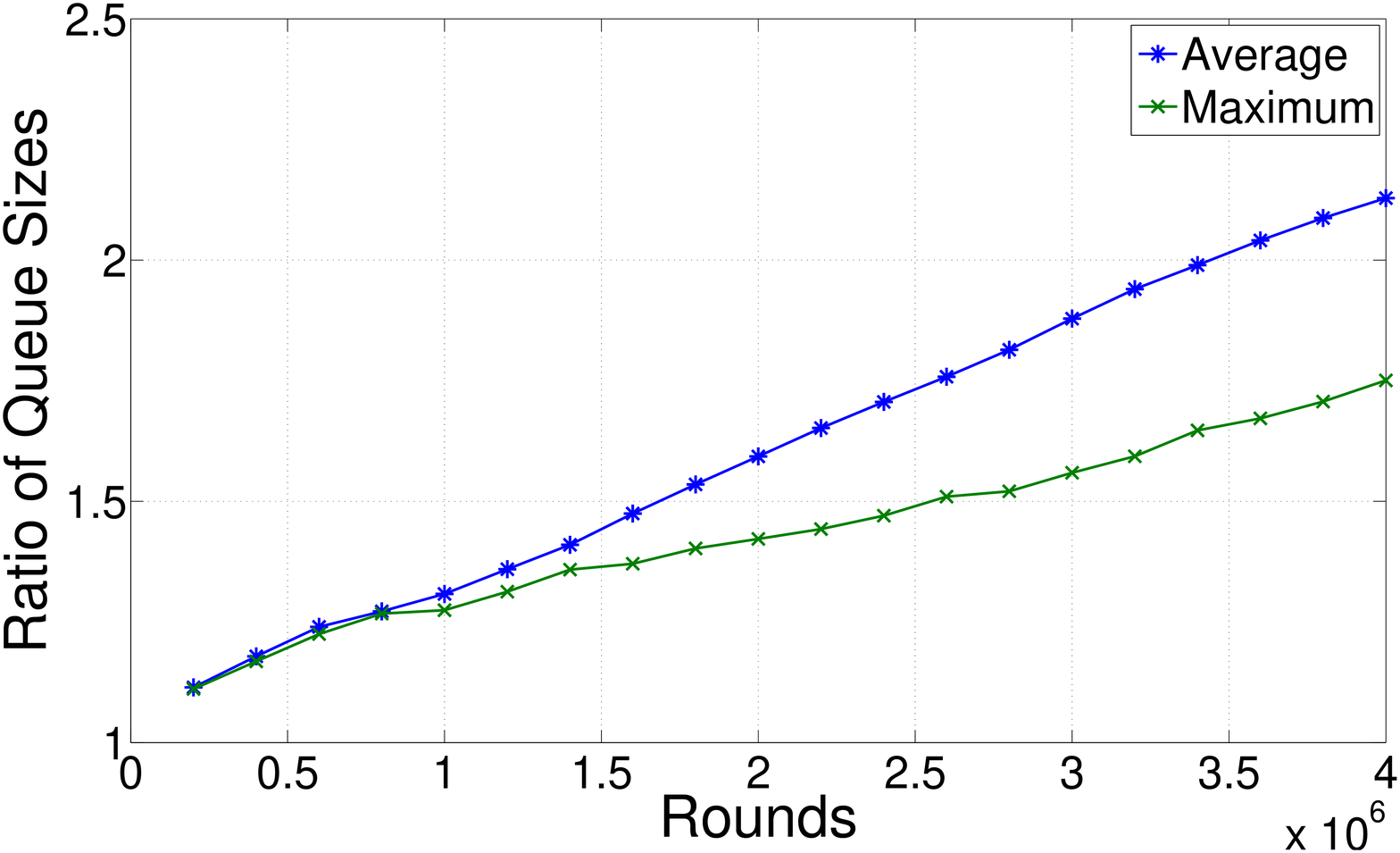}
	\includegraphics[scale=0.135]{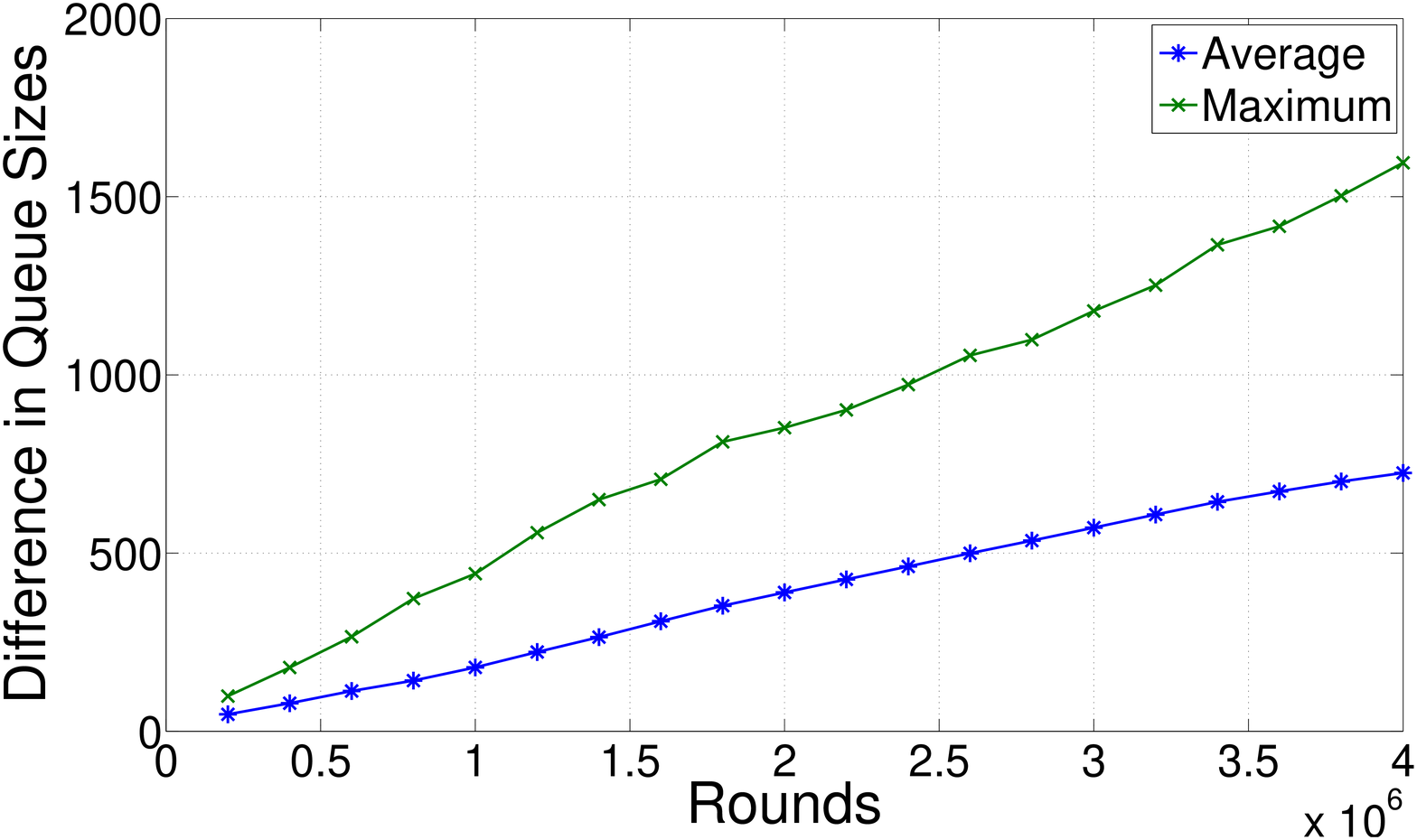}
	\caption{Ratio of ScanAll to ScanOpt Queue sizes and difference between ScanAll and ScanOpt Queue Sizes.}
	\label{ratioAndDiffQ}
\end{figure*}

\par The figures measuring queue sizes over time show that the trend is in fact similar to the trend in latency with ScanALL performing considerably worse than ScanOPT while ScanNone grow even more rapidly over time. Figure ~\ref {averageQCentralized} shows the average while queue sizes while Figure ~\ref {maximumQCentralized} shows the maximum queue sizes and Figure ~\ref{ratioAndDiffQ} shows the ratio and differences of ScanALL and ScanNONE increasing overtime.

\begin{figure*}[!h]
	\centering     
	\includegraphics[scale=0.135]{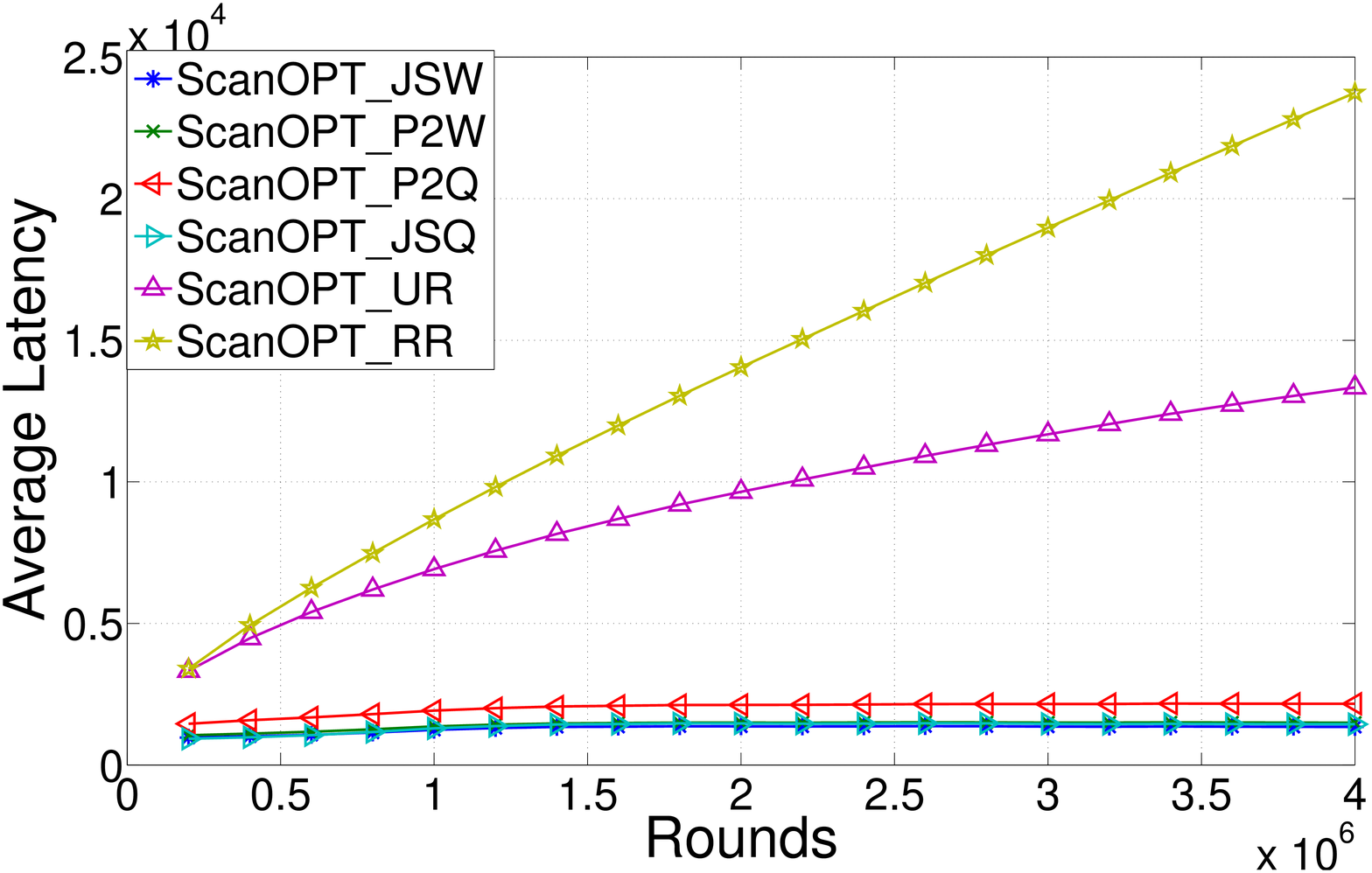}
	\includegraphics[scale=0.135]{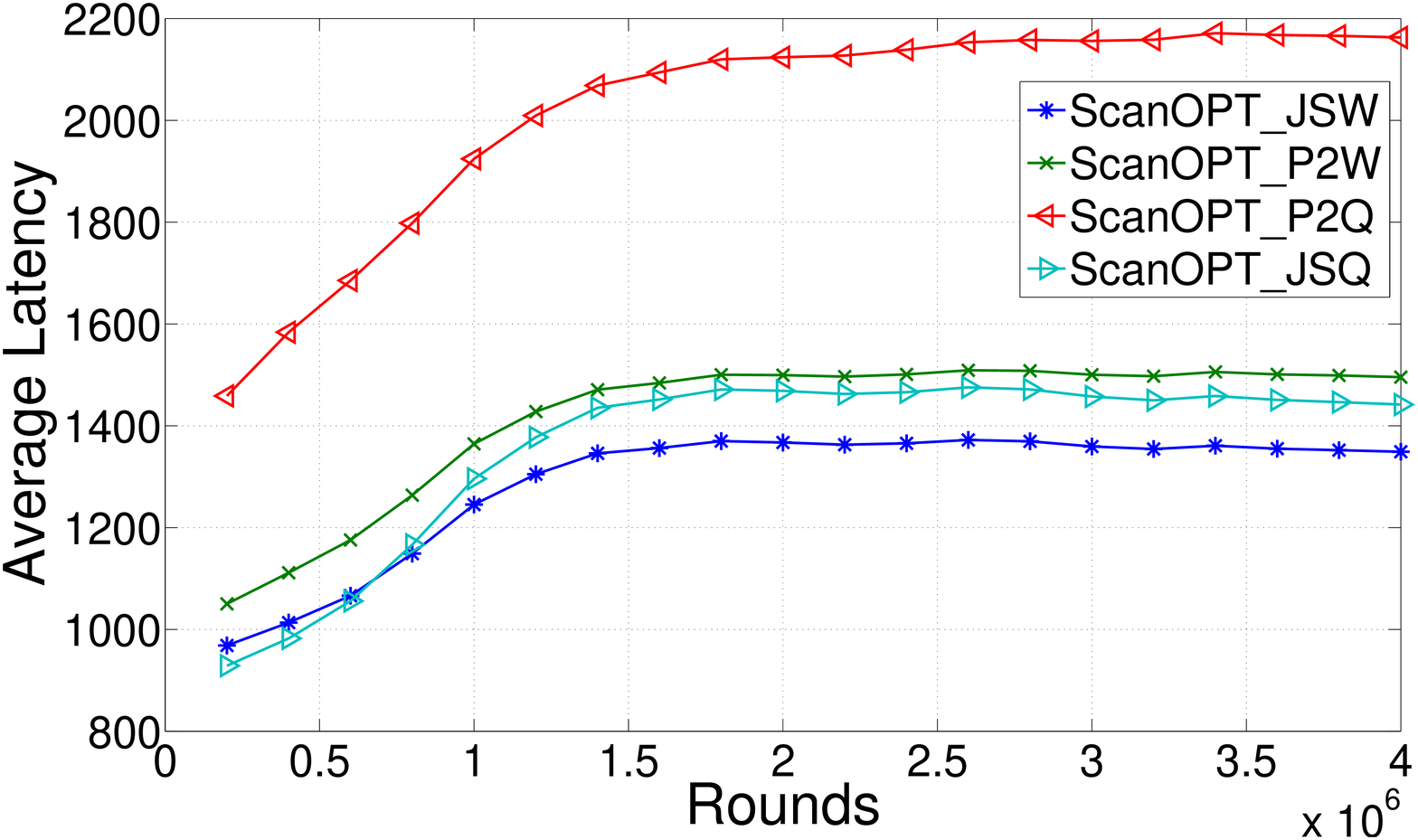}
	\caption{Comparison of average latency using ScanOPT\_JSQ, ScanOPT\_JSW, ScanOPT\_P2Q, ScanOPT\_P2W, ScanOPT\_UR and ScanOPT\_RR.}
	\label {averageLDistributed}
\end{figure*}

\begin{figure*}[!h]
	\centering     
	\includegraphics[scale=0.135]{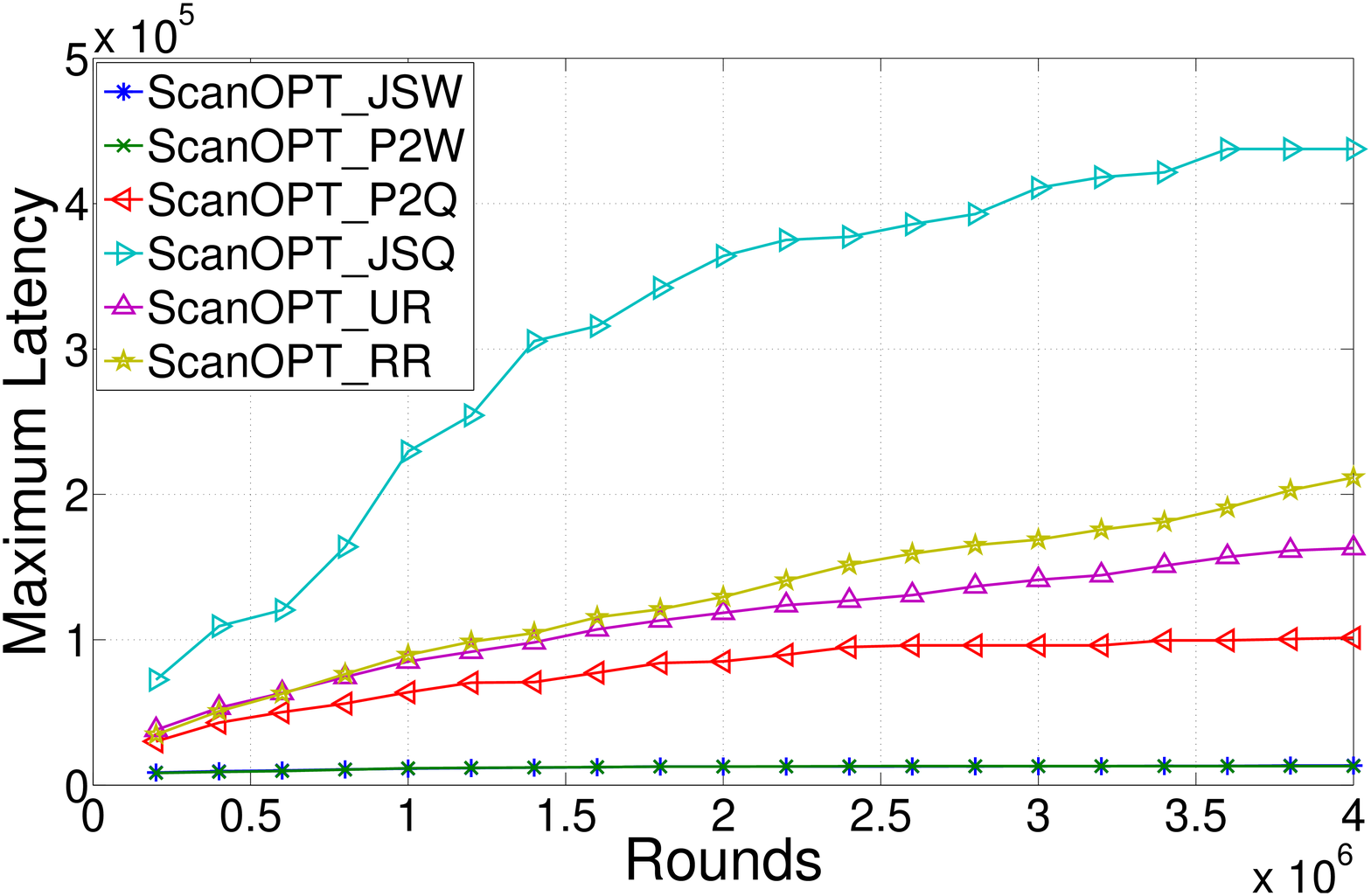}
	\includegraphics[scale=0.135]{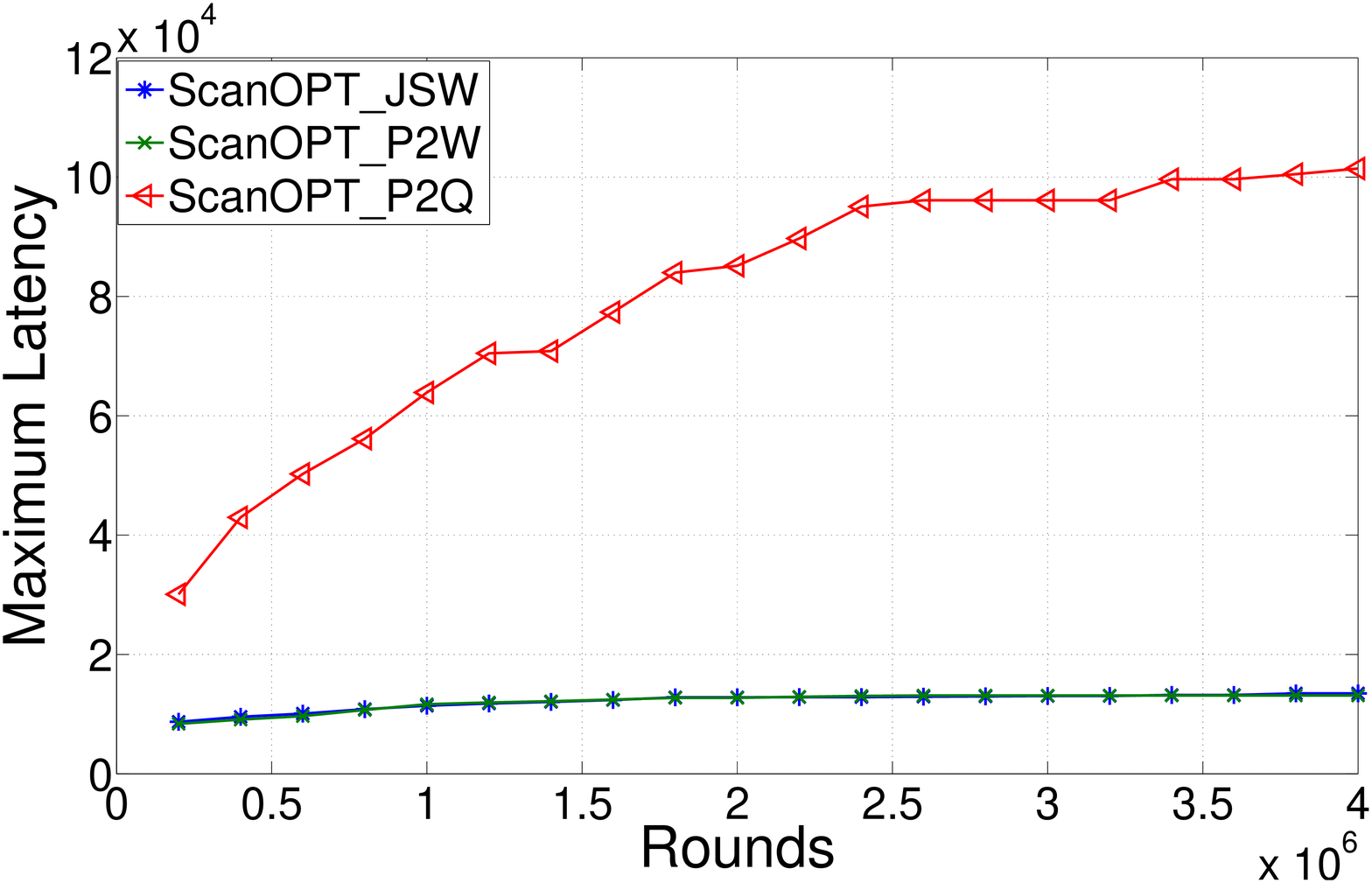}
	\caption{Comparison of maximum latency using ScanOPT\_JSQ, ScanOPT\_JSW, ScanOPT\_P2Q, ScanOPT\_P2W, ScanOPT\_UR and ScanOPT\_RR.}
	\label {maximumLDistributed}
\end{figure*}

\begin{figure*}[!h]
	\includegraphics[scale=0.135]{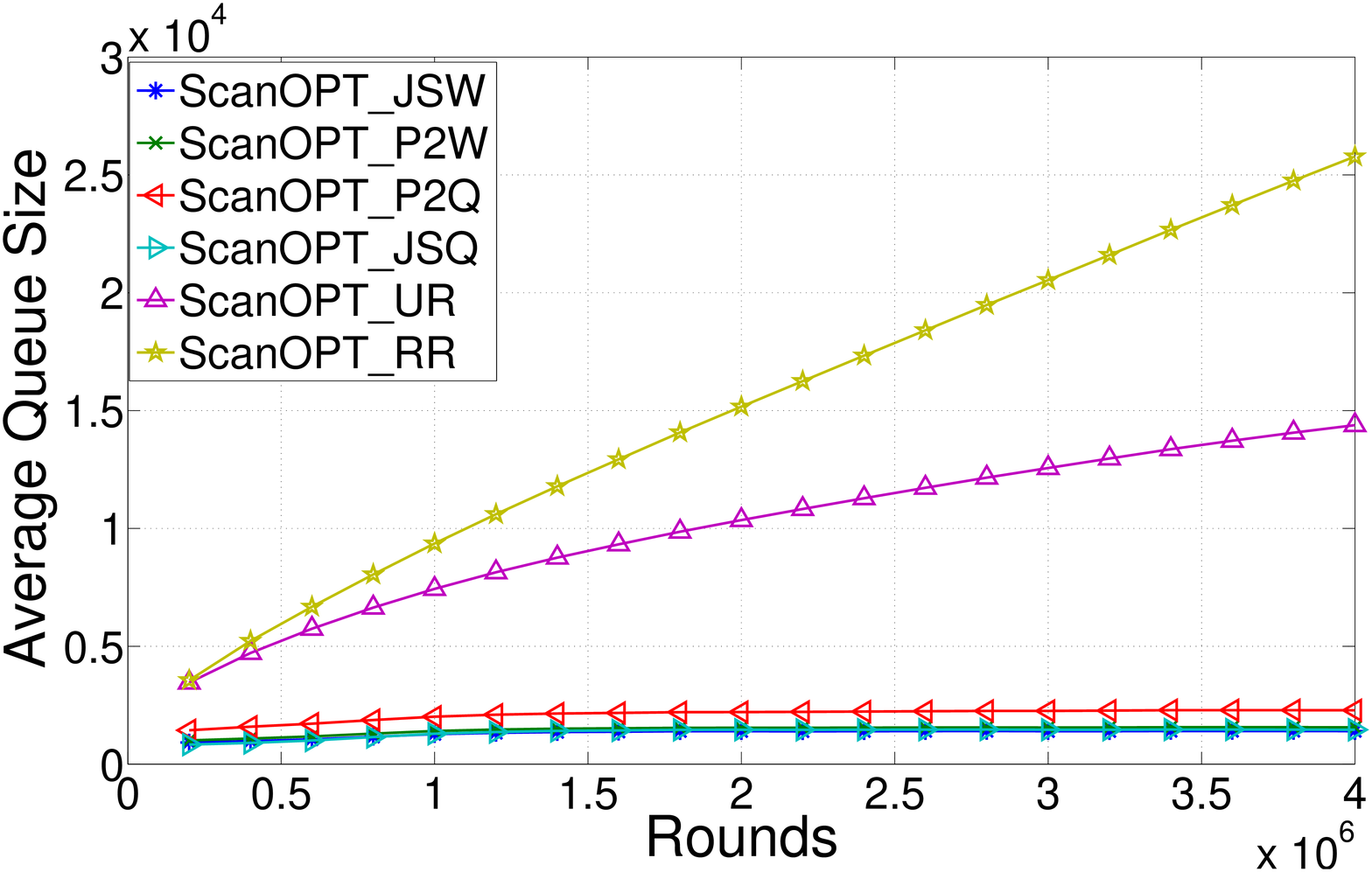}
	\includegraphics[scale=0.135]{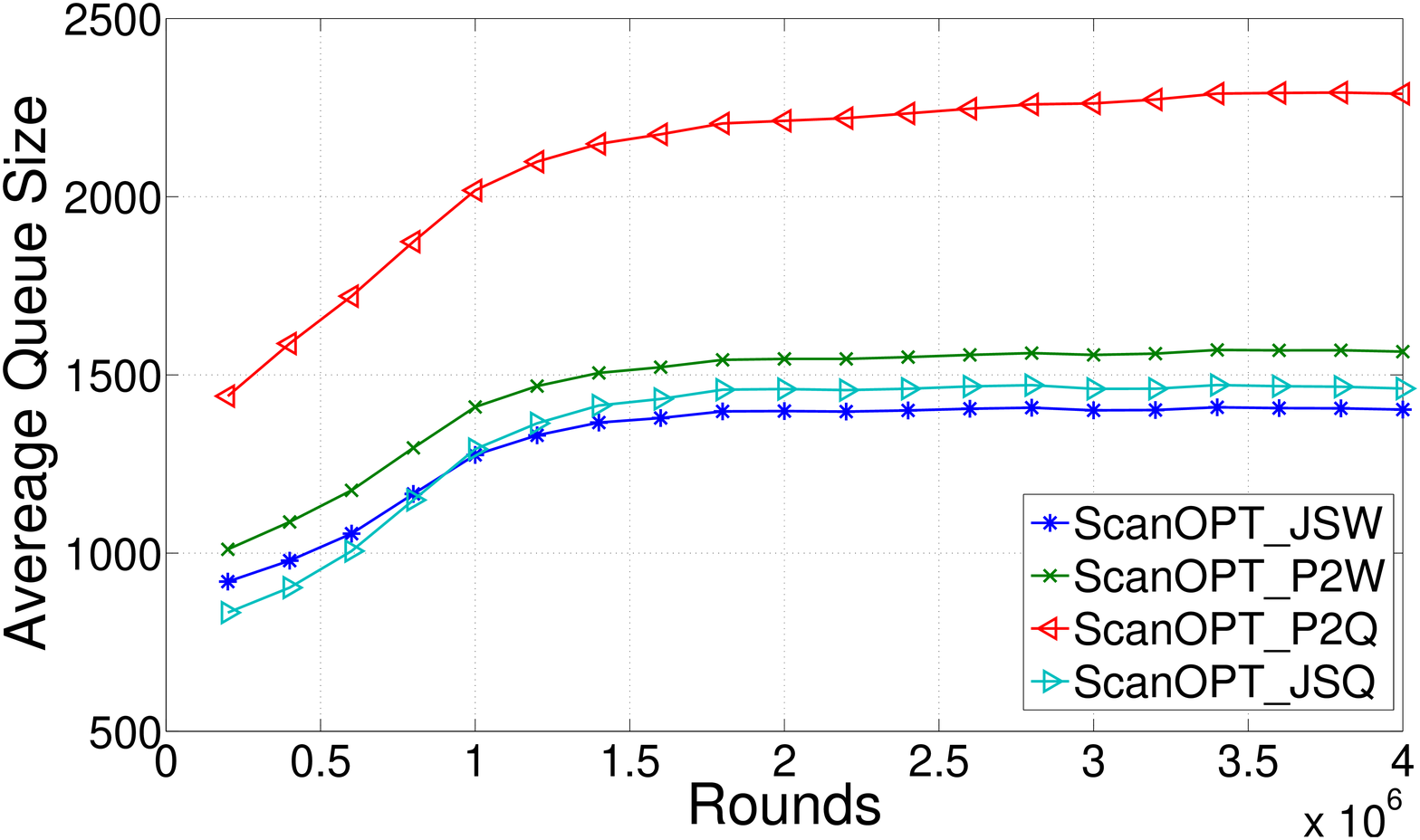}
	\caption{Decentralized Algorithms: Comparison of average queue sizes using ScanOPT\_JSQ, ScanOPT\_JSW, ScanOPT\_P2Q, ScanOPT\_P2W, ScanOPT\_UR and ScanOPT\_RR.}
	\label {averageQDistributed} 
\end{figure*}

\begin{figure*}[!h]
	\includegraphics[scale=0.135]{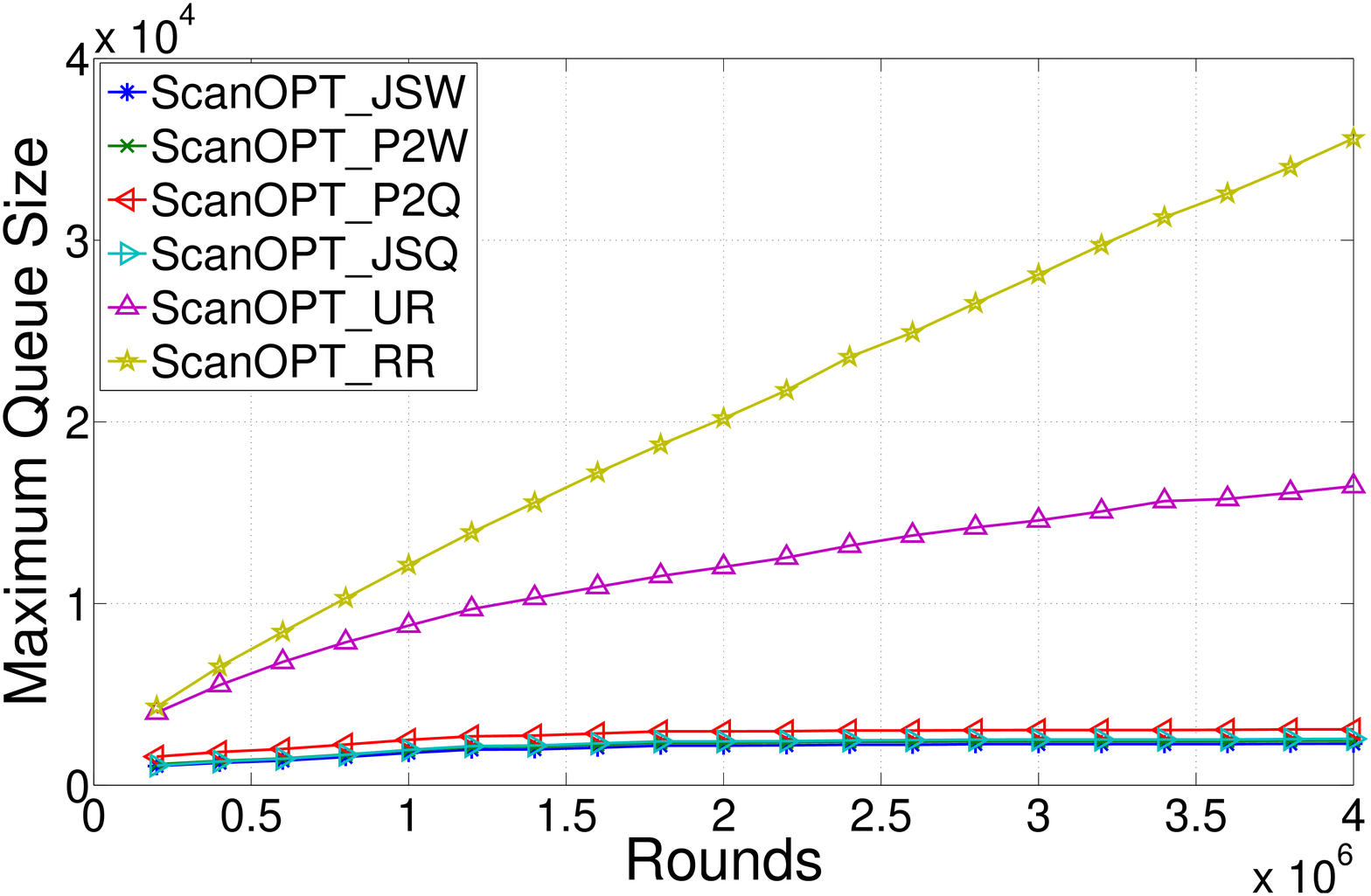}
	\includegraphics[scale=0.135]{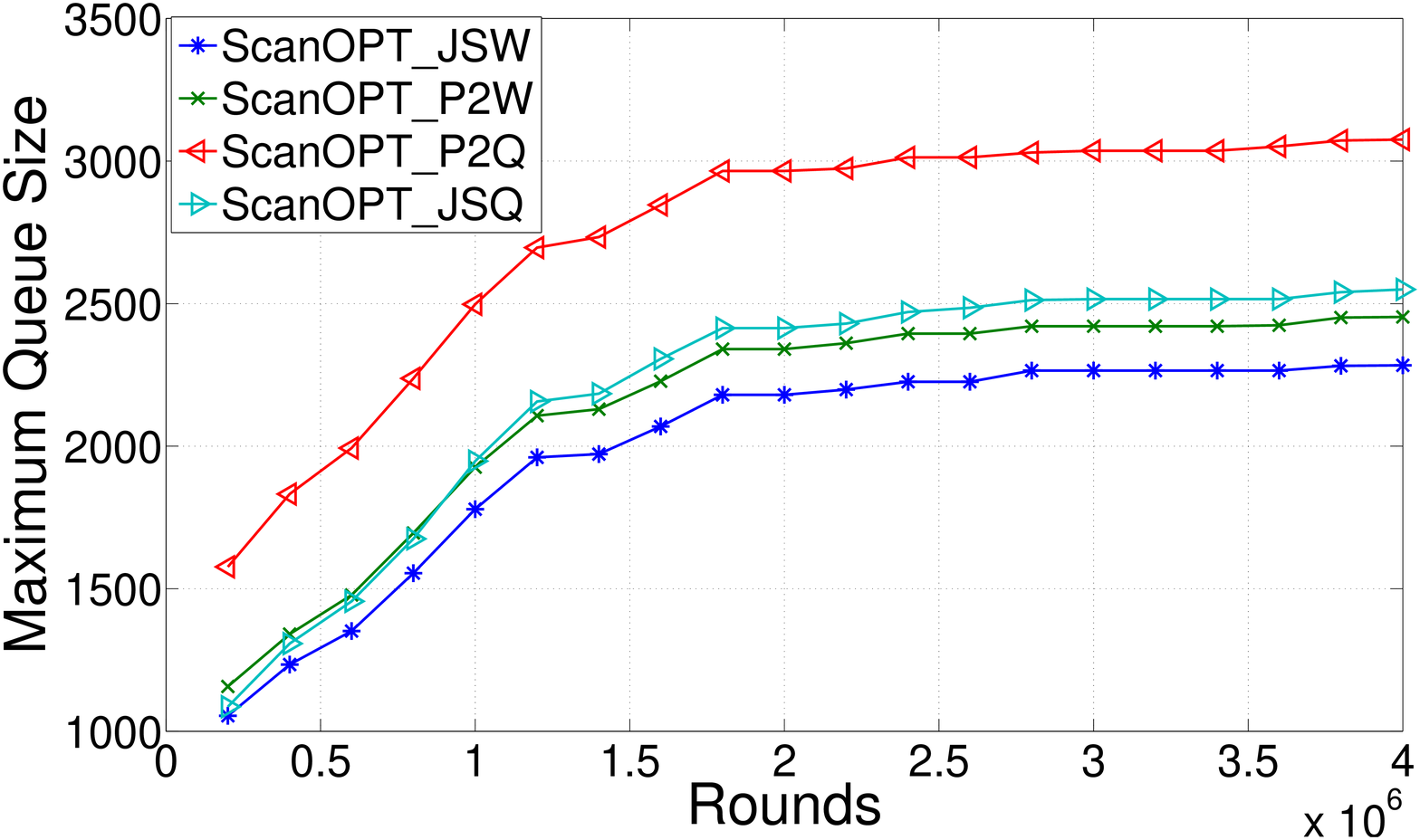}
	\caption{Decentralized Algorithms: Comparison of maximum queue sizes using ScanOPT\_JSQ, ScanOPT\_JSW, ScanOPT\_P2Q, ScanOPT\_P2W, ScanOPT\_UR and ScanOPT\_RR.}
	\label {maximumQDistributed}
\end{figure*}

\subsubsection{Decentralized Approaches}
\par In Figure~\ref{averageLDistributed}, we compare the latency of the six decentralized algorithms using ScanOPT strategy, i.e., ScanOPT\_JSQ, ScanOPT\_JSW, ScanOPT\_P2Q, ScanOPT\_P2W, ScanOPT\_UR and ScanOPT\_RR.
The best performing algorithm is the one based on JSW. 
This is followed by the two algorithms based on power of choices, and then the one based on JSW.
The worst performing algorithms are the ones based on round robin and uniform random selection, which grow rapidly.
\par

\par Figure~\ref{maximumLDistributed}, shows the trend for maximum latency. Where as expected, the algorithm based on JSW outperforms all the algorithms. A strange phenomenon we noticed is that of JSQ. We do not have a clear explanation of this phenomenon, although we suspect that
this could be because choosing right configuration based on workload, 
as is done by {\SecureMaxWork}, causes long windows of time without feeding the local queues using the shortest workload policy could make these
windows (and thus latencies) even longer than using the shortest work paradigm.

\par
In Figure \ref {averageQDistributed} we compared the average queue sizes of the 6 decentralized algorithms using ScanOPT strategy i.e. ScanOPT\_JSQ, ScanOPT\_JSW, ScanOPT\_P2Q, ScanOPT\_P2W, ScanOPT\_UR and ScanOPT\_RR. The best performing algorithm is the one based on JSW it is not surprising because it considers the exact amount of work left in all the servers to decide where to route new jobs. This is followed by JSQ and then the two algorithms based on power of two choices. The worst performing algorithms are the ones based on round robin and uniform random selection which are not stabilizing. The performance is very similar for maximum queue size comparison shown in Figure \ref {maximumQDistributed}.

\par Therefore the general trend is that the algorithms based on checking all the servers (JSQ and JSW) always outperform the ones based on power of two choices (P2W and P2Q), this phenomenon is not surprising because the algorithms based on all the servers consider the entire system state while the ones based on power of two choices are random. It should be note that the algorithms based on power of two choices are faster because decision can be made in constant time (which server to send an arriving job) while the ones based on all servers decision can only be made linear to the number of servers.


\section{Conclusions, Extensions and Open Problems}
\label{sec:miscellaneous}

\subsection{Decentralized scheduling}

We provided a rigorous mathematical analysis of a centralized scheduler {\SecureMaxWork} with central queues, where jobs could be distributed to various machines at any time after their arrival. Similar analysis applies to the decentralized {\SecureMaxWork} with Join-Shortest-Work (JSW) routing policy. Recall that this policy, upon arrival of type-$j$ job, sends it to machine that has minimum workload of type-$j$ jobs, $Z_j^{(m)}$, which is defined as in section \ref{sec:model} but now computed for each machine separately. Then each machine tries to maximize work done, $\max_{N \in S} \sum_{j=0}^J Z_j^{(m)}(t) N_j$. All the steps in the analysis of the main algorithm {\SecureMaxWork} apply in this setting, resulting in almost identical analysis as in Section~\ref{sec:analysis}.
An interesting open problem is to analyze mathematically the other five decentralized
implementations of {\SecureMaxWork}, and perhaps other similar decentralized algorithms.

\subsection{Non-preemptiveness}

In the centralized {\SecureMaxWork} we assumed that in regular time intervals (at the beginning of each time step) all machines can be reconfigured --- all jobs could be rescheduled and redistributed among the machines, where they will be further processed. In practice, interrupting execution of some jobs may be very costly. One may consider a model, where a job, once started on a machine, can not be paused or rescheduled for completion in a different time, nor processed on a different machine. 
The main idea in adapting {\SecureMaxWork} to this model is to divide time into windows of length of $T$ time slots. 
$T$ should be large enough so that any job could be started at the start of the time window without breaking the above constraint.
The algorithm will schedule jobs as previously with an additional constraint that only jobs that can be finished within a time window may be started. The stability analysis
should remain the same, except that the margin for unstable arrival rates should be made
a bit larger to accommodate potential losses of resources at the end of the time window. Thus the higher $T$ the better stability, though the latency may increase - studying
this trade-off is an interesting open problem.

\subsection{Algorithms without knowledge of arrival rates}

Finding the optimal scanning frequencies requires the knowledge of arrival rates of jobs of each type, length and genuine/malicious status. In practice, however, these values are not provided in advance. We can estimate them given a large enough sample.

If arrival rates are not provided, we can start {\SecureMaxWork} algorithm using scan-all strategy for a fixed but sufficiently long amount of time. During this time we  learn the genuine/malicious status of jobs (due to the scan-all strategy) and therefore we will be able to estimate user-generated and malicious jobs arrival rates. 
We can then use scanning frequencies that are optimal for the estimated arrival rates. Note that using a different scanning strategy at the beginning for a fixed amount of time should not have impact on stability.

Furthermore, we can use scan-all strategy for a fixed amount of time repeatedly, with significantly longer pause after each time (during which we will be using the scanning probabilities computed based on the estimates), in order to enhance the quality of  estimation of arrival rates, and thus using resources more and more efficiently. If pauses get long enough, this strategy should give better results than running scan-all strategy only once.

Another approach is to use scan-all strategy once and then run the algorithm with optimal scanning frequencies for calculated estimations, but utilize information given by scanning jobs according to optimal scanning frequencies. Since job arrivals are i.i.d. among time slots, scanning $x$ jobs randomly should give as good estimations as scanning first $x$ jobs. Therefore, even when using optimal scanning frequencies, we can improve our estimation of arrival rates each time a job was scanned.
Designing and analyzing a stable algorithm for more scarce adversarial arrivals of malicious jobs is an interesting open problem.

\subsection{Probability of successful scanning}

We can also enhance the model by assuming that scanning can fail with some probability $p$, i.e., a malicious job may be scanned but still not discovered as a malicious one. In such model, if scanning failed on a malicious job, then this job is indistinguishable from a genuine user-generated one. Therefore with probability $p$ scanned malicious jobs will have to be processed fully. We should still be able to use {\SecureMaxWork} algorithm, with a slight modification of the formula for job weights $Z_j$, which would have to include the failure probability as additional factor in some components. As the result, the optimal scanning frequencies may be slightly different but still in many cases better than scan-all or scan-none.

\bibliographystyle{plain}

\end{document}